\documentclass[aps,prx,superscriptaddress,twocolumn,twoside,nofootinbib,floatfix,a4paper]{revtex4-2} %
\usepackage[T1]{fontenc}
\usepackage[utf8]{inputenc}
\usepackage{amsmath,amssymb,amsthm,dsfont,bm}
\usepackage{graphicx}
\usepackage{hyperref}
\usepackage{comment}
\usepackage{xr}
\usepackage{bbm}
\usepackage{setspace}
\usepackage{enumitem}
\usepackage{xcolor,colortbl}
\usepackage{multirow}
\usepackage{latexsym}
\usepackage{amsfonts}
\usepackage{mathrsfs}
\usepackage{natbib}
\usepackage{verbatim}
\usepackage{gensymb}
\usepackage{caption}
\usepackage{subcaption}
\usepackage{ragged2e}
\usepackage{oplotsymbl}
\usepackage{array}
\usepackage{physics}
\usepackage{tikz}
\usetikzlibrary{arrows.meta}
\usepackage{tikz-3dplot}
\usepackage{blochsphere}

\usepackage{soul}
\DeclareCaptionJustification{justified}{\justifying}
\usepackage{blkarray}
 \usepackage{mathtools}
\makeatletter
\newtheorem*{rep@theorem}{\rep@title}
\newcommand{\newreptheorem}[2]{%
\newenvironment{rep#1}[1]{%
 \def\rep@title{#2 \ref{##1}}%
 \begin{rep@theorem}}%
 {\end{rep@theorem}}}
\makeatother
\theoremstyle{plain}

\newtheorem*{thm*}{Theorem}
\newreptheorem{thm}{Theorem}

\newtheorem{obs}{Observation}

\newtheorem{defn}{Definition}
\newtheorem*{obs*}{Observation}

\theoremstyle{remark}

\newcommand{\ins}{{\mathsf{I}}}
\newcommand{\insi}{{\mathsf{J}}}

\definecolor{myurlcolor}{rgb}{0,0,0.4}
\definecolor{mycitecolor}{rgb}{0,0,0.4}
\definecolor{myrefcolor}{rgb}{0.5,0,0}
\hypersetup{colorlinks, linkcolor=myrefcolor, citecolor=mycitecolor, urlcolor=myurlcolor}

\newcommand{\one}[0]{\mathds{1}}

\begin{document}
\title{Generalized measurement incompatibility}
 \author{Edwin Peter Lobo}
 \email{edwin.lobo@ulb.be}
 \author{Maria Balanz\'o-Juand\'o}
 \author{Stefano Pironio}
 \affiliation{Laboratoire d'Information Quantique, Universit\'e libre de Bruxelles (ULB), Av. F. D. Roosevelt 50, 1050 Bruxelles, Belgium.}
 
\begin{abstract}
Quantum measurements can be incompatible, i.e., they can fail to be jointly measurable. Recently, a weaker notion of joint-measurability, called partial joint-measurability, was proposed by Masini \textit{et al.} in [Quantum {\bf 8}, 1574 (2024)].
In this work, we further generalize this notion to the setting where only a subset of the outcomes of each measurement is required to be jointly determined by classical variables.
We provide two mathematical formulations of partial joint-measurability and show that, like full joint-measurability, it can be decided by solving a single semidefinite program. 
We prove that in the case of an untrusted measurement device, an adversary Eve, limited to classical side information, can perfectly guess the outcomes of the measurement device if and only if the set of measurements is partially jointly measurable. 
We derive analytical thresholds on the detection efficiency below which generic measurements become partially jointly measurable.
Such bounds directly yield limits on the robustness of device-independent and semi-device-independent quantum cryptographic protocols against detection inefficiency. In particular, our results highlight the importance of a careful treatment of postselection in security analyses.
\end{abstract}

\maketitle
\section{Introduction} 
Measurement incompatibility, i.e., the impossibility of realizing several measurements through classical postprocessing of a single parent measurement, is a necessary ingredient in essentially all demonstrations of nonclassicality in quantum theory, such as Bell nonlocality~\cite{Bell1964,Brunner2014}, quantum steering~\cite{Uola2014,Quintino2014}, and Bell-Kochen-Specker contextuality~\cite{Bell1966,KochenSpecker1967,Mermin1993}. It is also tightly connected to quantum communication tasks with untrusted devices, where it limits the information accessible to an eavesdropper Eve. Conversely, if the measurements performed by an untrusted device are compatible, then Eve can replace it by a purely classical device enabling her to guess the outcome of every measurement; consequently, no fresh randomness is produced and such measurements are useless for tasks such as device-independent (DI) or semi-DI quantum key distribution (QKD)~\cite{Acin2007,Primaatmaja2023,Pawlowski2011,Woodhead2015}.

Recently, a refined notion of joint-measurability, called partial joint-measurability, was introduced in Ref.~\cite{Masini2024}. In most implementations of protocols such as DIQKD, the outcomes of only a subset of measurements are eventually used for key generation. If Eve can perfectly guess these outcomes, there is clearly no security. Partial joint-measurability captures this hybrid notion: the key-generating outcomes are classically determined, and hence available to Eve, while the remaining measurement outcomes may still involve genuine quantum operations. Thus, partial joint-measurability provides a simple attack for quantum communication protocols based on one untrusted measurement device.

In this work, we further generalize partial joint-measurability. In particular, our generalization explicitly accounts for postselection in quantum communication protocols, i.e., the fact that only a subset of outcomes for each input may be used for key generation. We show that deciding whether a given set of measurements is partially jointly measurable can be cast as a single semidefinite program (SDP). We also provide an operational interpretation: an adversary Eve restricted to classical side information (i.e., without quantum memory) can perfectly guess the outcomes of an untrusted measurement device for all input states if and only if the measurements are partially jointly measurable. Furthermore, we derive tight analytical bounds on the threshold detection efficiency at which a set of measurements becomes partially jointly measurable. As an application, we provide a simple attack on the steering-based QKD protocol of Ref.~\cite{Branciard2012}, which invalidates their security proof.

\section{Preliminaries}\label{sec:partialJM}

We consider a scenario where Bob has a measurement device capable of performing $n$ different measurements, labeled by $y \in [n] \coloneqq \{1,\dots,n\}$. Each measurement $y$ has $k_y$ possible outcomes, labeled by $b \in [k_y] \coloneqq \{1,\dots,k_y\}$. For each setting $y$, the measurement is described by a POVM $B_y \coloneqq \{ B_{b|y} \}_{b}$ acting on a Hilbert space $\mathcal{H}$.
\subsection{Operational meaning of joint-measurability}
Let us first recall the notion of standard joint-measurability. 
The measurements $B_y$ are said to be jointly measurable, or compatible, if there exists a single 
`parent measurement' $E = \{E_c\}_c$ and a conditional probability distribution 
$p(b|y,c)$ (often called a response function), such that
\begin{equation}\label{defn:joint-measurability1}
    B_{b|y} = \sum_{c} E_{c}\, p(b|y,c) \,,
\end{equation}
for all measurement choices $y$ and outcomes $b$. Operationally, this means that the measurements $B_y$ can be simulated by a device that 
first performs the single parent measurement $E$, producing a classical outcome $c$, and then 
generates the final output $b$ by classical postprocessing of $c$ conditioned on the chosen 
setting $y$.
\subsection{Partial joint-measurability}
Recently, a weaker notion of joint-measurability, termed {partial joint-measurability}, 
was introduced in Ref.~\cite{Masini2024}. 
In this framework, the outcomes of only a subset of the $n$ measurements are required to be 
fully determined by some classical information $c$ that is independent of $y$, 
while the remaining measurements may still involve genuine quantum operations. 

Here, we propose an even more general definition of partial joint-measurability, 
in which only a subset of the outcomes of each measurement are required to be fully 
determined by the classical information $c$. 
To formalize this, we partition the outcome set $[k_y]$ of each measurement $y$ into two disjoint subsets: 
$\mathcal{G}_y \subseteq [k_y]$ and its complement 
$\overline{\mathcal{G}}_y = [k_y] \setminus \mathcal{G}_y$.

Intuitively, our notion of general partial joint-measurability captures the idea that, conditioned on the event that an outcome 
in $\mathcal{G}_y$ occurs, the specific outcome $b \in \mathcal{G}_y$ is entirely determined by 
some classical information $c$ available prior to the choice of $y$. 
This framework reduces to previous notions of joint-measurability: 
if $\mathcal{G}_y = [k_y]$ for all $y$, we recover standard joint-measurability, 
since every outcome is fully determined by $c$; 
if instead $\mathcal{G}_y = [k_y]$ for some $y$ and $\mathcal{G}_{y'} = \emptyset$ for others, 
we obtain the notion of partial-input joint-measurability introduced in Ref.~\cite{Masini2024}. 
In general, the subsets $\mathcal{G}_y$ may be arbitrary, allowing for a flexible 
and fine-grained characterization of measurement compatibility.

We define this notion formally through a two-step simulation procedure. 
First, consider a general quantum operation that is independent of $y$ and produces a classical outcome $c$ together with an output state $\rho_c$. In full generality, this corresponds to a quantum instrument 
$\ins=\{\ins_c\}_c$, i.e., a collection of completely positive, 
trace-nonincreasing maps such that $\sum_c \ins_c$ is trace-preserving. 
Acting on an input state $\rho$, the instrument outputs a classical label $c$ together with the (unnormalized) post-measurement state $\ins_c(\rho)$. 
Next, we allow for a $y$-dependent operation that takes as input both $c$ and the state $\ins_c(\rho)$, and produces a classical outcome $b$. 
This is described by a collection of POVMs 
$M_{y,c}=\{M_{b|y,c}\}_{b}$, see Fig.~\ref{fig:effective_povm}. We can now formally define partial joint-measurability.

\begin{defn}\label{defn:partial-joint-measurability}
Given a collection of subsets $\mathcal{G} = (\mathcal{G}_y)_y$ with 
$\mathcal{G}_y \subseteq [k_y]$, the measurements 
$B_y = \{B_{b|y}\}_{b}$ are said to be \emph{$\mathcal{G}$-jointly measurable} ($\mathcal{G}$-JM)  if there exists 
a quantum instrument $\ins=\{\ins_c\}_c$ and a family of POVMs 
$M_{y,c}=\{M_{b|y,c}\}_{b}$ such that, for all $y \in [n]$ 
\begin{align}
    \Tr[\rho B_{b|y}] &= \sum_c \Tr[\ins_{c}(\rho)\, M_{b|y,c}] \quad \forall  \rho,  b \in [k_y]\,,\label{eq:right_prob}\\
	 M_{b|y,c} &= p(b|y,c)\, M_{\star|y,c} \quad \forall b \in \mathcal{G}_y \,, \label{eq:partial-JM-constraint}
\end{align}
for some positive operator 
$
M_{\star|y,c} := \one - \sum_{b \in \overline{\mathcal{G}}_y} M_{b|y,c}
$
independent of $b$, and where $p(b|y,c)$ is a conditional probability distribution satisfying $p(b|y,c)\ge 0$ and $\sum_{b \in \mathcal{G}_y} p(b|y,c) = 1$.
\end{defn}

In this definition, condition~\eqref{eq:right_prob} guarantees that the overall simulation is 
indistinguishable from the original measurements $B_y$ for all input states $\rho$. 
Condition~\eqref{eq:partial-JM-constraint} enforces that, for outcomes $b \in \mathcal{G}_y$, 
the operators $M_{b|y,c}$ are proportional to a common operator $M_{\star|y,c}$, 
with proportionality given by the classical response function $p(b|y,c)$. 
Thus, conditioned on obtaining an outcome in $\mathcal{G}_y$, the specific $b$ is determined 
entirely by the classical value $c$. 
Operationally, we can interpret this as follows: after the action of the instrument $\ins$, 
a quantum measurement (depending on both $y$ and $c$) is performed on $\ins_c(\rho)$, with 
POVM elements $\{M_{b|y,c}\}_{b \in \overline{\mathcal{G}}_y} \cup \{M_{\star|y,c}\}$. Here $M_{\star|y,c}$ corresponds to the event “an outcome in $\mathcal{G}_y$ occurs”, 
in which case the precise outcome $b \in \mathcal{G}_y$ is sampled classically 
according to $p(b|y,c)$.

If $\mathcal{G}_y = [k_y]$ for all $y$, then it is easy to see that Def.~\ref{defn:partial-joint-measurability} recovers the definition of standard joint-measurability. Indeed, if $\mathcal{G}_y = [k_y]$ for all $y$, then $M_{\star|y,c} = \one$ according to Def.~\ref{defn:partial-joint-measurability}, and the POVM elements of every measurement performed by Bob's device are of the form $M_{b|y,c} = p(b|y,c) \one$. Thus, Bob's device is a purely classical device that performs classical postprocessing conditioned on $y,c$, and the role of the quantum instrument $\{\ins_c\}_c$ can be restricted to supplying the classical outcome $c$ to Bob's device. This can be modelled as a measurement $\{E_c\}_c$ and we can rewrite Eq.~\eqref{eq:right_prob} as $\Tr[\rho B_{b|y}] = \sum_c p(b|y,c) \Tr[\rho\,E_{c}]\, \forall  \rho,  b \in [k_y]$, which is equivalent to Eq.~\eqref{defn:joint-measurability1}. If instead $\mathcal{G}_y = [k_y]$ for some $y$ and $\mathcal{G}_{y'} = \emptyset$ for others, then the Def.~\ref{defn:partial-joint-measurability} recovers the notion of partial-input joint-measurability introduced in Ref.~\cite{Masini2024}.

\begin{figure}
    \includegraphics[width=0.48\textwidth]{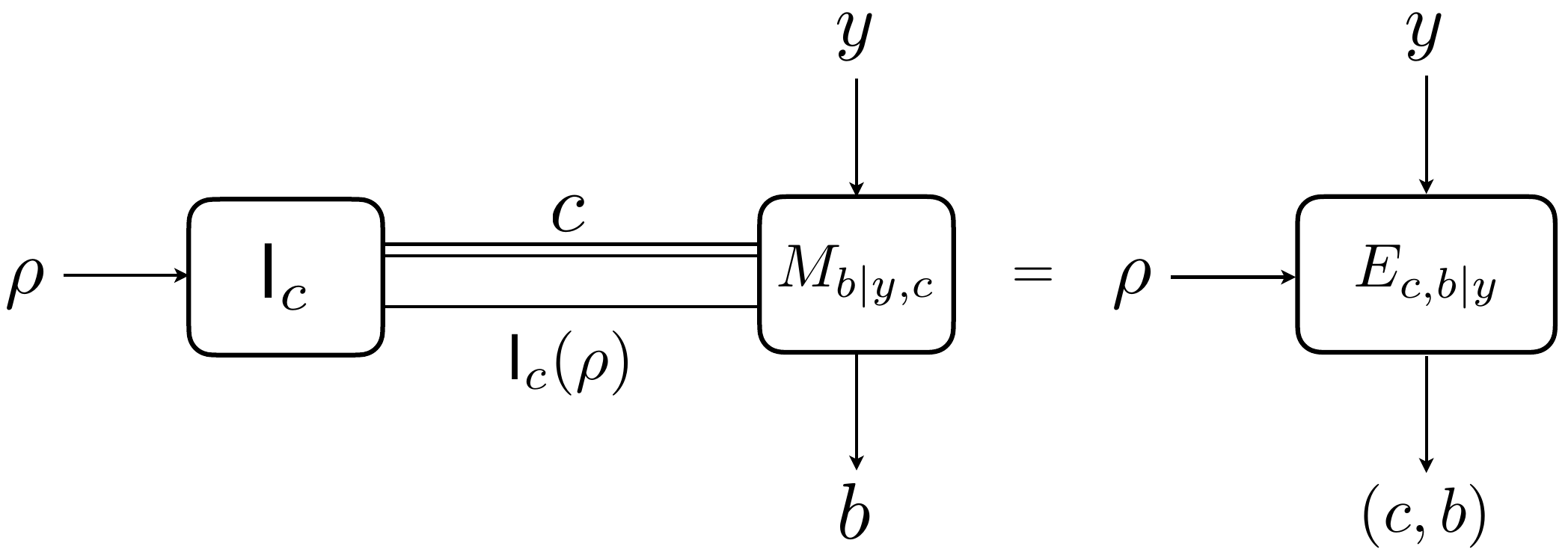}
    \caption{Left:
	The overall simulation of the measurements $B_y$ consists of first applying the instrument $\{\ins_c\}_c$ to $\rho$, obtaining $c$ and $\ins_c(\rho)$, and then performing the measurement $M_{y,c}$ on $\ins_c(\rho)$ to obtain $b$. Right: The combination of the quantum instrument $\ins=\{\ins_{c}\}_c$ and the measurement $M_{y,c}=\{{M}_{b|y,c}\}_{b}$ can be seen as a single effective POVM $E_y=\{E_{c,b|y}\}_{c,b}$ with input $y$ and outcomes $(c,b)$.}
    \label{fig:effective_povm}
\end{figure}
We can view the combination of the instrument $\ins$ and the measurement $M_{y,c}$ in Def.~\ref{defn:partial-joint-measurability} as a single effective measurement $E_y=\{E_{c,b|y}\}_{c,b}$ with input $y$ and outcomes $(c,b)$, see Fig.~\ref{fig:effective_povm}. Indeed, using the adjoint maps $\ins_c^\dagger$, we can write 
\begin{equation}
	\Tr[\ins_c(\rho) M_{b|y,c}] = \Tr[\rho \,\ins_{c}^\dagger({M}_{b|y,c})] = \Tr[\rho E_{c,b|y}] \,,
\end{equation}
where we have defined the effective measurement operators,
\begin{equation}
	E_{c,b|y} \coloneqq \ins_{c}^\dagger ({M}_{b|y,c})\,.
\end{equation}
 Since the adjoint maps $\ins_{c}^\dagger$ are completely positive and $\sum_c \ins_c^\dagger$ is unital, i.e., $\sum_c \ins_c^\dagger(\one) = \one$, the operators $E_{c,b|y}$ form valid POVMs: they are positive and satisfy $\sum_{c,b} E_{c,b|y}=\one$ for all $y$. Further, they also obey the no-signaling conditions
\begin{equation}\label{eq:no-signaling}
	\sum_{b} E_{c,b|y}=E_c \quad \forall y \in [n]\quad\text{(no-signaling)}\,,
\end{equation}
i.e., marginalizing over the outcomes $b$ yields an operator $E_c$ that is independent of $y$. This follows immediately from $\sum_b E_{c,b|y} = \sum_b \ins_{c}^\dagger(M_{b|y,c}) = \ins_{c}^\dagger(\one)\eqqcolon E_c$.
In terms of these effective measurements, the conditions~\eqref{eq:right_prob} and~\eqref{eq:partial-JM-constraint} become
\begin{align}
	&B_{b|y} = \sum_c E_{c,b|y}&\text{(consistency)}\,, \label{eq:consistency} \\
	&E_{c,b|y} = p(b|y,c)\, E_{c,\star|y} \quad \forall b \in \mathcal{G}_y &\text{(partial JM)}\,,\label{eq:partial-JM}
\end{align}
for some positive operator $E_{c,\star|y} = E_c - \sum_{b \in \overline{\mathcal{G}}_y} E_{c,b|y}$.

Conversely, any set of POVMs $E_{y}=\{{E}_{c,b|y}\}_{b,c}$ that satisfies the conditions~\eqref{eq:no-signaling}--\eqref{eq:partial-JM} can be interpreted in the sense of Def.~\ref{defn:partial-joint-measurability}.  That is, given such operators $E_{c,b|y}$, one can always construct a quantum instrument $\ins=\{\ins_c\}_c$ and measurements $M_{y,c}=\{M_{b|y,c}\}_{b}$ satisfying the conditions of Def.~\ref{defn:partial-joint-measurability}, as shown in App.~\ref{app:equivalence_PJM}. Hence, partial joint-measurability can equivalently be defined as the existence of POVMs $E_y=\{E_{c,b|y}\}_{c,b}$ that satisfy conditions~\eqref{eq:no-signaling}--\eqref{eq:partial-JM}. We call these operators partial parent (PP) POVMs, since they play a similar role to the parent POVM in full joint-measurability. Indeed, if $\mathcal{G}_y = [k_y]$ for all $y$, then the partial JM condition~\eqref{eq:partial-JM} implies that $E_{c,b|y} = p(b|y,c)\, E_c$ for all $b,y$ and all the effective measurements arise from a single parent POVM $E=\{E_c\}_c$ followed by classical postprocessing. 

\subsection{Reformulation of partial joint-measurability as an SDP}\label{sec:sdp_reformulation}
In the case of full joint-measurability, one can always assume, without loss of generality, that the response function $p(b|y,c)$ is deterministic, i.e., $p(b|y,c)\in\{0,1\}$ for all $b,y,c$. The same holds for the definition of $\mathcal{G}$-joint-measurability. More specifically, we can assume that the outcome $c$ is a tuple $\bar{\beta} = (\beta_1,\ldots,\beta_n)$ with $\beta_y \in \mathcal{G}_y$ (and $\beta_y=\emptyset$ if $\mathcal{G}_y = \emptyset$), which deterministically specifies that the outcome $b=\beta_y$ should be output whenever the input is $y$ and the outcome lies in $\mathcal{G}_y$. 
This leads to the following equivalent formulation of $\mathcal{G}$-joint-measurability.
\begin{defn}\label{defn:partial-joint-measurability2}
Given a collection of subsets $\mathcal{G} = (\mathcal{G}_y)_y$ with 
$\mathcal{G}_y \subseteq [k_y]$, the measurements 
$B_y = \{B_{b|y}\}_{b}$ are said to be \emph{$\mathcal{G}$-jointly measurable} if there exist 
POVM elements $E_{\bar{\beta},b|y}$ such that
\begin{align}
	&\sum_{b} E_{\bar{\beta},b|y} = E_{\bar{\beta}} \quad \forall y \in [n]\,, &\text{(no-signaling)}\,,\label{eq:no-signalingSDP}\\
	&B_{b|y} = \sum_{\bar{\beta}} E_{\bar{\beta},b|y}\,,&\text{(consistency)}\,,  \label{eq:consistencySDP}\\
	&E_{\bar{\beta},b|y} = \delta_{b,\beta_y}\,E_{\bar{\beta},\star|y} \quad \forall b \in \mathcal{G}_y\,&\text{(partial JM)}\,, \label{eq:partial-JM_SDP}
\end{align}
for some positive operator $E_{\bar{\beta},\star|y} := E_{\bar{\beta}} - \sum_{b \in \overline{\mathcal{G}}_y} E_{\bar{\beta},b|y}$ independent of $b$. 
\end{defn}
It is clear that the above formulation is a special case of the general one in terms of arbitrary PP POVMs $E_{c,b|y}$. Conversely, as shown in App.~\ref{app:proof:deterministic_response}, starting from any PP POVMs $E_{c,b|y}$ one can always find  POVMs $E_{\bar{\beta},b|y}$ satisfying Eqs.~\eqref{eq:no-signalingSDP}--\eqref{eq:partial-JM_SDP}. Hence, the two definitions are equivalent.
The second one is particularly useful: since the number of operators $E_{\bar{\beta},b|y}$ is finite, deciding whether they can be compatible with Eqs.~\eqref{eq:no-signalingSDP}--\eqref{eq:partial-JM_SDP}, i.e., whether the measurements $B_y$ are $\mathcal{G}$-JM, can be formulated as an SDP. It also leads to the following Observations.

\begin{obs}\label{obs:1}
	If the measurements $\{B_y\}_{y\in[n]}$ are  $\mathcal{G}$-JM, then they are also $\mathcal{G}'$-JM whenever $\mathcal{G}'$ satisfies $\mathcal{G}'_y\subseteq \mathcal{G}_y$ for all $y\in[n]$.
\end{obs}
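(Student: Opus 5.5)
The plan is to work with the first equivalent formulation of $\mathcal{G}$-joint-measurability in terms of arbitrary partial parent POVMs $E_{c,b|y}$, i.e., conditions~\eqref{eq:no-signaling}--\eqref{eq:partial-JM}. I will show directly that the same family of operators, with the same labels $c$, also witnesses $\mathcal{G}'$-joint-measurability. This avoids having to re-index the deterministic labels $\bar\beta$ of Def.~\ref{defn:partial-joint-measurability2}. Since the two definitions are equivalent (App.~\ref{app:proof:deterministic_response}), this suffices.

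Concretely, let $E_{c,b|y}$ satisfy no-signaling, consistency, and the partial JM condition with respect to $\mathcal{G}$, with response function $p(b|y,c)$. No-signaling and consistency do not refer to $\mathcal{G}$ at all, so they carry over unchanged. It remains to verify~\eqref{eq:partial-JM} for $\mathcal{G}'$.

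Fix $y,c$ and define
\begin{equation*}
E'_{c,\star|y} := E_c - \sum_{b\in\overline{\mathcal{G}'}_y} E_{c,b|y} = \sum_{b\in\mathcal{G}'_y} E_{c,b|y}.
\end{equation*}
Because $\mathcal{G}'_y\subseteq\mathcal{G}_y$, each term satisfies $E_{c,b|y}=p(b|y,c)E_{c,\star|y}$. Hence
\begin{equation*}
E'_{c,\star|y} = q\,E_{c,\star|y}, \qquad q:=\sum_{b\in\mathcal{G}'_y}p(b|y,c)\in[0,1].
\end{equation*}
This operator is positive. Now set $p'(b|y,c):=p(b|y,c)/q$ for $b\in\mathcal{G}'_y$. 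Then $E_{c,b|y}=p'(b|y,c)E'_{c,\star|y}$, the $p'$ are nonnegative, and they sum to one over $\mathcal{G}'_y$.

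The only subtle point, and the expected obstacle, is the case $q=0$ (or $\mathcal{G}'_y=\emptyset$).
\begin{itemize}
\item If $\mathcal{G}'_y=\emptyset$, the constraint for that $y$ is vacuous.
\item If $\mathcal{G}'_y\neq\emptyset$ and $q=0$, then $E_{c,b|y}=0$ for all $b\in\mathcal{G}'_y$, and also $E'_{c,\star|y}=0$. The relation $E_{c,b|y}=p'(b|y,c)E'_{c,\star|y}$ then holds for any probability distribution $p'$ on $\mathcal{G}'_y$, e.g.\ the uniform one.
\end{itemize}

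With these choices all conditions of the $E_{c,b|y}$-formulation hold for $\mathcal{G}'$, so the measurements are $\mathcal{G}'$-JM. Alternatively, one could argue within Def.~\ref{defn:partial-joint-measurability2} by mapping each tuple $\bar\beta$ to a tuple $\bar\beta'$ and coarse-graining. However, $\beta_y\notin\mathcal{G}'_y$ forces outcomes to be moved into $\overline{\mathcal{G}'}_y$, which the first-formulation argument handles automatically.
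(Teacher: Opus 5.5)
Your proof is correct and rests on the same observation as the paper's one-line argument: no-signaling and consistency do not involve $\mathcal{G}$, and the partial-JM constraint only gets weaker when $\mathcal{G}_y$ shrinks. The paper states this within Def.~\ref{defn:partial-joint-measurability2}, where strictly speaking the label set $\bar\beta\in\prod_y\mathcal{G}_y$ also changes. Your use of arbitrary labels $c$ with the renormalized response $p'=p/q$ (including the $q=0$ case) makes explicit the relabeling step that the paper leaves implicit.
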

\begin{proof}
This is immediate since shrinking each set $\mathcal{G}_y$ to a subset only relaxes the constraints~\eqref{eq:partial-JM_SDP}, while leaving the consistency and no-signaling conditions unchanged.
\end{proof}

\begin{obs}\label{obs:2}
	If the measurements $\{B_y\}_{y\in [n]}$ are  $\mathcal{G}$-JM, then for any subset $\mathcal{U} \subseteq [n]$, the measurements $\{B_y\}_{y\in \mathcal{U}}$ are $\mathcal{G}'$-JM, where $\mathcal{G}' = (\mathcal{G}_y)_{y\in\mathcal{U}}$.
\end{obs}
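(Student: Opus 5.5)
The plan is to start from a certificate of $\mathcal{G}$-joint-measurability in the deterministic form of Def.~\ref{defn:partial-joint-measurability2}, and coarse-grain it into a certificate for the restricted family $\{B_y\}_{y\in\mathcal{U}}$. Concretely, let $E_{\bar{\beta},b|y}$, with $\bar{\beta}=(\beta_1,\ldots,\beta_n)$, satisfy Eqs.~\eqref{eq:no-signalingSDP}--\eqref{eq:partial-JM_SDP} for all $y\in[n]$. For the subfamily, the classical label should only be the restricted tuple $\bar{\beta}'=(\beta_y)_{y\in\mathcal{U}}$, with $\beta_y\in\mathcal{G}_y$ for $y\in\mathcal{U}$. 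So for $y\in\mathcal{U}$ I define
\begin{equation*}
E'_{\bar{\beta}',b|y} \coloneqq \sum_{\bar{\beta}\,:\,\bar{\beta}|_{\mathcal{U}}=\bar{\beta}'} E_{\bar{\beta},b|y}\,,
\end{equation*}
that is, I sum over the components $\beta_{y'}$ with $y'\notin\mathcal{U}$.

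The verification then has four parts, each of which should follow from linearity of this sum.
\begin{enumerate}
\item \textbf{Positivity.} Each $E'_{\bar{\beta}',b|y}$ is a sum of positive operators, hence positive.
\item \textbf{No-signaling.} Summing over $b$ gives $E'_{\bar{\beta}'}:=\sum_{\bar{\beta}|_{\mathcal{U}}=\bar{\beta}'}E_{\bar{\beta}}$. This is independent of $y\in\mathcal{U}$, which is Eq.~\eqref{eq:no-signalingSDP}. Together with positivity, it also shows that $\{E'_{\bar{\beta}',b|y}\}_{\bar{\beta}',b}$ is a POVM.
\item \textbf{Consistency.} Summing over $\bar{\beta}'$ amounts to summing over all $\bar{\beta}$, so $\sum_{\bar{\beta}'}E'_{\bar{\beta}',b|y}=B_{b|y}$, which is Eq.~\eqref{eq:consistencySDP}.
\item \textbf{Partial JM constraint.} Fix $y\in\mathcal{U}$ and $b\in\mathcal{G}_y$. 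Every $\bar{\beta}$ appearing in the sum has the same $\beta_y$ as $\bar{\beta}'$, so $\delta_{b,\beta_y}$ factors out of the sum. This gives
\begin{equation*}
E'_{\bar{\beta}',b|y}=\delta_{b,\beta_y}\sum_{\bar{\beta}|_{\mathcal{U}}=\bar{\beta}'}E_{\bar{\beta},\star|y}\,.
\end{equation*}
The sum on the right equals $E'_{\bar{\beta}'}-\sum_{b\in\overline{\mathcal{G}}_y}E'_{\bar{\beta}',b|y}=:E'_{\bar{\beta}',\star|y}$, again by linearity, and it is positive as a sum of positive operators. This is Eq.~\eqref{eq:partial-JM_SDP}.
\end{enumerate}

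The only point needing care is this last factoring step. The coarse-graining must keep the components $\beta_y$ for every $y\in\mathcal{U}$, so that the deterministic response stays well defined. Summing over all of $\bar{\beta}$ indiscriminately would destroy the Kronecker delta.

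A simpler alternative avoids relabeling altogether. Keep the full label $\bar{\beta}$ as an arbitrary classical label $c$ and simply discard the inputs $y\notin\mathcal{U}$, using the general form Eqs.~\eqref{eq:no-signaling}--\eqref{eq:partial-JM} with $p(b|y,c)=\delta_{b,\beta_y}$. The constraints for $y\in\mathcal{U}$ are a subset of those already satisfied. The equivalence of the two definitions (App.~\ref{app:proof:deterministic_response}) then transfers this to Def.~\ref{defn:partial-joint-measurability2}. I would present this shorter argument, with the coarse-graining as the explicit construction.
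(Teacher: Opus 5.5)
Your proposal is correct and follows the paper's argument. The paper's proof is the one-line remark that restricting to $y\in\mathcal{U}$ only drops constraints from Def.~\ref{defn:partial-joint-measurability2}. Your coarse-graining over the components $\beta_{y'}$ with $y'\notin\mathcal{U}$ makes explicit the relabeling of $\bar{\beta}$ into tuples indexed by $\mathcal{U}$, which the paper leaves implicit; your alternative route through the general form and App.~\ref{app:proof:deterministic_response} produces exactly this coarse-graining.
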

\begin{proof}
This is also immediate since we only relax the constraints in Def.~\ref{defn:partial-joint-measurability2}.
\end{proof}

\begin{obs}[Masini \emph{et al.}~\cite{Masini2024}]\label{obs:3}
	If the measurements $\{B_y\}_{y\in [n]}$ are  $\mathcal{G}$-JM, where $\mathcal{G}_y = [k_y]$ for all $y \neq y'$ and $\mathcal{G}_{y'} = \emptyset$ for a single input $y' \in [n]$, then the measurements $\{B_y\}_{y \in [n]}$ are fully jointly measurable.
\end{obs}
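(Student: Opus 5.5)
Starting from Definition~\ref{defn:partial-joint-measurability2}, the constraint \eqref{eq:partial-JM_SDP} for every $y\neq y'$ fixes all outcomes of every measurement except $B_{y'}$ once $\bar\beta$ is known. The plan is to absorb the remaining quantum measurement of $y'$ into an enlarged classical label, so that a single parent POVM appears. First, take the partial parent POVMs $E_{\bar\beta,b|y}$ given by Definition~\ref{defn:partial-joint-measurability2}, where $\bar\beta=(\beta_y)_{y\neq y'}$ (with $\beta_{y'}=\emptyset$). For $y\neq y'$ we have $\overline{\mathcal{G}}_y=\emptyset$. Hence $E_{\bar\beta,\star|y}=E_{\bar\beta}$ and $E_{\bar\beta,b|y}=\delta_{b,\beta_y}E_{\bar\beta}$. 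For $y=y'$ there is no constraint beyond no-signaling: $\sum_b E_{\bar\beta,b|y'}=E_{\bar\beta}$.

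Next, define the candidate parent POVM with outcomes $c=(\bar\beta,b')$, $b'\in[k_{y'}]$, by
\begin{equation*}
G_{\bar\beta,b'} \coloneqq E_{\bar\beta,b'|y'} \ge 0 .
\end{equation*}
By no-signaling and consistency, $\sum_{\bar\beta,b'}G_{\bar\beta,b'}=\sum_{\bar\beta}E_{\bar\beta}=\sum_{\bar\beta,b}E_{\bar\beta,b|y}=\sum_b B_{b|y}=\one$, so $G$ is a valid POVM. Take the deterministic response function $p(b|y,(\bar\beta,b'))=\delta_{b,b'}$ for $y=y'$ and $\delta_{b,\beta_y}$ for $y\neq y'$.

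Then verify \eqref{defn:joint-measurability1}. For $y=y'$,
\begin{equation*}
\sum_{\bar\beta,b'}\delta_{b,b'}G_{\bar\beta,b'}=\sum_{\bar\beta}E_{\bar\beta,b|y'}=B_{b|y'}
\end{equation*}
by consistency. For $y\neq y'$,
\begin{equation*}
\sum_{\bar\beta,b'}\delta_{b,\beta_y}E_{\bar\beta,b'|y'}=\sum_{\bar\beta}\delta_{b,\beta_y}E_{\bar\beta}=\sum_{\bar\beta}E_{\bar\beta,b|y}=B_{b|y},
\end{equation*}
where the first equality uses no-signaling and the second uses the partial JM constraint together with consistency.

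The only delicate point is the step for $y\neq y'$. It requires the marginal $E_{\bar\beta}$ to be the same operator for every input, which is exactly the no-signaling condition \eqref{eq:no-signalingSDP}; this is why the argument works for a single unconstrained input. It would fail for two such inputs, because the two quantum measurements need not commute or share a joint refinement. Apart from this, the argument is bookkeeping.
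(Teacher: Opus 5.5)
Your proof is correct and takes essentially the same approach as the paper. The paper argues operationally that $M_{y',\bar\beta}$ can be performed right after the instrument, so the concatenation is a single instrument with classical outcomes $(\bar\beta,b)$ that fixes every output; the POVM of that concatenated instrument is exactly your parent POVM $G_{\bar\beta,b'}=\ins_{\bar\beta}^\dagger(M_{b'|y',\bar\beta})=E_{\bar\beta,b'|y'}$, and you carry out the same verification explicitly in the effective-POVM language, where no-signaling is visibly the step that handles $y\neq y'$.
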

\begin{proof}
This observation was made in Ref.~\cite{Masini2024}, following a similar argument first presented in Ref.~\cite{Acin2016}. We reproduce the proof in App.~\ref{app:proof:obs3} for completeness.
\end{proof}

\section{Operational interpretation of partial joint-measurability} \label{sec:Eve_attack}

The above characterization also admits a clear operational interpretation in the context 
of quantum cryptography. Consider a scenario in which Bob's measurement device is untrusted 
and an adversary, Eve, has access to the input quantum channel 
of Bob's device. This setting is standard in DI or semi-DI QKD. 

If Bob's honest measurements $B_y$ are $\mathcal{G}$-JM, then
there exists a quantum instrument $\ins=\{\ins_{\bar{\beta}}\}_{\bar{\beta}}$ and POVMs $M_{y,\bar{\beta}}=\{M_{b|y,\bar{\beta}}\}_b$ that reproduce for any input state $\rho$ the same statistics as the original measurements $B_y$, but which returns the outcome $b=\beta_y$ with certainty whenever $b \in \mathcal{G}_y$.
Since Bob's device is untrusted, we cannot exclude the possibility that Bob's device is implementing the POVMs $M_{y,\bar{\beta}}$ instead of $B_y$ (Eve could have replaced the original device with a malicious one). But then Eve can mount the following attack: she applies the instrument $\ins$ to the input state $\rho$ and records a copy of the classical outcome $\bar{\beta}$. She then forwards the post-measurement state $\ins_{\bar{\beta}}(\rho)$ to Bob's device, together with the classical information $\bar{\beta}$. Upon receiving Bob's input $y$, the device performs the measurement $M_{y,\bar{\beta}}$. If $b\in\mathcal{G}_y$, Eve then knows with certainty that the outcome is $b=\beta_y$. In other words, $\mathcal{G}$-JM implies that there is no intrinsic randomness in the subset of outcomes specified by $\mathcal{G}$, and hence Eve can perfectly guess these outcomes.

A converse statement also holds. In the setting of an untrusted measurement device, if an adversary Eve—limited to classical side information (i.e., without access to quantum memory)—can perfectly guess the outcomes $b$ of the measurements $B_y$ whenever $b \in \mathcal{G}_y$ for any input state $\rho$,
then the measurements $B_y$ must necessarily be $\mathcal{G}$-JM.
 
To see this, note that the most general strategy for an adversary Eve without quantum 
memory corresponds to the procedure depicted in Fig.~\ref{fig:effective_povm}. 
Eve applies a quantum instrument $\ins=\{\ins_c\}_c$ to the input state $\rho$, 
producing a classical outcome $c$ and a post-measurement state $\ins_c(\rho)$. 
She forwards $\ins_c(\rho)$ and $c$ to Bob's device, while keeping a copy of $c$. 
Upon receiving Bob's input $y$, the device performs a measurement 
$M_{y,c}=\{M_{b|y,c}\}_b$ that was preprogrammed by Eve, as the device is untrusted. Eve produces a guess $g\in \mathcal{G}_y$ for Bob's outcome $b$ according to some distribution $q(g|y,c)$, which depends only on $y$ and $c$ since this is her only side information.

Eve's attack defines effective POVMs $E_y=\{E_{c,b|y}\}_{c,b}$ that satisfy by construction the no-signaling conditions~\eqref{eq:no-signaling} and the consistency conditions~\eqref{eq:consistency} (otherwise, Eve's attack would not reproduce the correct statistics for all input states $\rho$). It remains to show that they also satisfy the partial JM condition~\eqref{eq:partial-JM}.

Fix $y$ and $c$. Note that the joint probability that the side information $c$ is produced, Bob's device returns 
outcome $b$, and Eve outputs guess $g$, given input $y$, factors as
\begin{align}
	p(c,b,g|y) = \Tr[\rho\, E_{c,b|y}]\, q(g|y,c)\,.
\end{align}
Pick some $g'$ such that $q(g'|y,c)$ is non-zero. Then we must have $p(c,b,g'|y) = 0$ for all $b \in \mathcal{G}_y$ such that $b\neq g'$, otherwise Eve would not be able to guess Bob's outcome perfectly whenever $b \in \mathcal{G}_y$. But because of the above factorization, this implies that either (i) $E_{c,b|y}$ is zero for all $b \in \mathcal{G}_y$, or (ii) $E_{c,b|y}$ is non-zero for the single value $b=g'$. In the first case, the partial JM condition~\eqref{eq:partial-JM} is satisfied with $E_{c,\star|y}=0$. In the second case, it is satisfied with $E_{c,b|y} = \delta_{b,g'} E_{c,\star|y}$ where $E_{c,\star|y} = E_{c,g'|y}$.

\section{Illustrations and applications}\label{sec:applications}

We now illustrate our generalized notion of $\mathcal{G}$-joint-measurability and its operational interpretation by applying it to several examples.
We focus on the setting of photonic implementations of quantum communication protocols, where losses are unavoidable and a measurement device may return a `no-click' outcome $\varnothing$ corresponding to the non-detection of the photon. In this case, the POVM elements of an ideal measurement $B_y=\{B_{b|y}\}_b$ are modified as
\begin{equation}\label{eq:lossy_measurements}
	B^{\eta}_{b|y} = 
	\begin{cases}
		\eta \, B_{b|y}  &\text{if} \quad b \neq \varnothing\,,\\
		(1-\eta) \one  &\text{if} \quad b = \varnothing\,,
	\end{cases}
\end{equation}
where $\eta$ is the probability of Bob's device detecting the quantum particle, commonly known as the detection efficiency. 
For values of $\eta$ sufficiently low, the effective measurements $B^{\eta}_y$ become partially jointly measurable, and hence useless for DI and semi-DI applications, as explained below.

\subsection{Generic strategies}\label{sec:generic_strategies}
We first consider a scenario where Bob's device performs $n$ arbitrary measurements $y \in \{1,\dots,n\}$ of the form \eqref{eq:lossy_measurements}, each with $k+1$ outcomes $b \in \{1,\dots,k,\varnothing\}$. We identify several generic strategies---independent of the specific form of the POVMs $B_y$---that render the lossy measurements $B^\eta_y$ $\mathcal{G}$-jointly measurable whenever $\eta$ is below a certain threshold. We analyse four cases, defined by the sets $\mathcal{G}_y$, and summarized in the following table.
\begin{table}[h]
    \centering
    \small
    \renewcommand{\arraystretch}{1.8}
    \begin{tabular}{l|c|c}
       \multicolumn{1}{c|}{JM}  & Subsets $\mathcal{G}$ & Bound on $\eta$ for $\mathcal{G}$-JM \\ \hline
       	\text{(\emph{a})} Full 
            & {$\mathcal{G}_y = \{1,\ldots,k,\varnothing\}\;\forall\, y$}
            & $\eta\leq \frac{1}{n}$ \\ \hline
        \text{(\emph{b})} Partial-input
            & $\begin{array}{@{}c@{}}\mathcal{G}_1 = \{1,\ldots,k,\varnothing\}\\[-4pt] \mathcal{G}_y = \emptyset \;(y\neq 1)\end{array}$
            & $\eta\leq \frac{1}{2}$ \\ \hline
        \text{(\emph{c})} Partial-outcome
            & {$\mathcal{G}_y = \{1,\ldots,k\}\;\forall\, y$}
            & $\eta\leq \max\{\frac{1}{n}, \frac{1}{k}\}$ \\ \hline
		$\begin{array}{@{}l@{}}\text{(\emph{d})} \text{ Partial input}\\[-4pt] \quad\text{\& outcome}\end{array}$
			& $\begin{array}{@{}c@{}}\mathcal{G}_1 = \{1,\ldots,k\}\\[-4pt] \mathcal{G}_y = \emptyset \;(y\neq 1)\end{array}$
			& $\eta\leq \frac{k}{2k-1}$	
    \end{tabular}
	\caption{Detection efficiencies $\eta$ for which the measurements $B^\eta_y$ are $\mathcal{G}$-JM using generic strategies.}\label{tab:cases}
\end{table}

\vspace{-6pt}
\subsubsection*{(a) Full JM} In this regime, Eve aims to guess the outcomes of all measurements, including no-click events. As follows from Ref.~\cite{Acin2016}, the measurements $B^\eta_y$ are fully jointly measurable whenever $\eta \le 1/n$. A strategy achieving this bound is as follows: Eve randomly guesses which measurement $y'$ will be chosen by Bob and performs $B_{y'}$ on the incoming state. She then forwards her guess $y'$ and the resulting outcome $b$ to Bob's device. If Bob's actual input $y$ matches $y'$ (with probability $1/n$), the device outputs $b$; otherwise, it outputs $\varnothing$. In the framework of Def.~\ref{defn:partial-joint-measurability} (or Eq.~\eqref{defn:joint-measurability1}), this is described by a parent POVM with elements $E_c=E_{(y',b')} = B_{b'|y'}/n$ and a response function $p(b|y,c) = p(b|y,(y',b'))= \delta_{b,b'} \delta_{y,y'} + \delta_{b,\varnothing} (1-\delta_{y,y'})$. This bound is known to be tight, as there exist sets of $n$ measurements in every dimension $d\ge2$ that are not compatible for $\eta > 1/n$~\cite{Skrzypczyk2015}.

\vspace{-6pt}
\subsubsection*{(b) Partial-input JM} In this case, Eve is interested in the outcomes of only a single measurement, here $y=1$, a situation common in QKD protocols where one setting is used for key generation and others for device testing. While the bound $\eta \le 1/n$ still applies, a more refined strategy from Ref.~\cite{Acin2016} improves the threshold to $\eta \le 1/2$. In this strategy, Eve performs $B_1$ with probability $1/2$, and forwards the outcome to Bob. With the remaining probability $1/2$, she simply forwards the state unaltered. If Bob's input is $y=1$ and Eve also measured $B_1$, Bob's device outputs Eve's outcome; if $y \neq 1$ and Eve did not measure $y=1$ and instead forwarded the state unaltered, Bob's device performs the honest measurement $B_y$ and outputs the outcome. In all other cases the device outputs $\varnothing$. This strategy can again be described in the framework of Def.~\ref{defn:partial-joint-measurability}. The instrument $\ins$ consists of $k+1$ elements $\ins_c$ where $c\in\{1,\ldots,k,\varnothing\}$. The corresponding Kraus operators are $K_c=\sqrt{1/2} \, \sqrt{B_{c|1}}$ for $c \in \{1,\ldots,k\}$ and $K_{\varnothing} = \sqrt{1/2} \, \one$. The measurement operators $M_{b|y,c}$ are given by $M_{b|y=1,c} = \delta_{c,b}  \, \one$ for $y = 1$, and
\begin{equation} 
	\begin{aligned}
		\forall y \neq 1\,, \quad &M_{b|y,c} = \begin{cases}
		\delta_{c,\varnothing} \, B_{b|y} &\text{if} \quad b \neq \varnothing\,,\\
		(1- \delta_{c,\varnothing}) \, \one &\text{if} \quad b = \varnothing\,.
		\end{cases}
	\end{aligned}
\end{equation}
This bound is also tight: it suffices to consider $n=2$. In this case, partial-input and full joint-measurability coincide (see Observations~\ref{obs:2} and~\ref{obs:3}), and the limit $\eta = 1/2$ is saturated by known incompatible measurements~\cite{Skrzypczyk2015}.

\vspace{-6pt}
\subsubsection*{(c) Partial-outcome JM}
While the notions of joint-measurability for cases \emph{(a)} and \emph{(b)} have been considered in previous works, and particularly in Ref.~\cite{Masini2024}, case \emph{(c)} requires the generalization of partial joint-measurability introduced in the present work. This case is relevant in situations with postselection~\cite{Branciard2012,Thinh2016,Xu2022} where only the conclusive outcomes are kept. For instance, in certain DI or semi-DI QKD protocols, the raw key is generated only from the conclusive outcomes while the no-click outcomes are discarded in key generation rounds. No-click events are still retained and taken into account in testing rounds. In such a scenario, Eve is interested in guessing only the conclusive outcomes of the measurements, which she can do perfectly whenever the measurements $B^\eta_y$ are $\mathcal{G}$-JM for the sets $\mathcal{G}_y = \{1,\ldots,k\}$.

From Observation~\ref{obs:1}, the bound of $\eta \le 1/n$ for full joint-measurability applies to this case as well (though it is not necessarily tight). However, a different strategy proposed in Ref.~\cite{Chaturvedi2024} can be implemented, in which Eve, instead of guessing which input is going to be used by Bob, guesses which output is going to be obtained by Bob, resulting in the alternative bound $\eta \le 1/k$. In this strategy, Eve guesses the \emph{outcome} rather than the \emph{input}: she guesses an outcome $c \in \{1,\dots,k\}$ with probability $1/k$. Bob's device then performs $B_y$; if the outcome is $b=c$, it is output; otherwise, the device returns $\varnothing$. Formally, this corresponds to an instrument $\ins$ with Kraus operators $K_c = \sqrt{1/k} \one$ and measurements $M_{b|y,c} = \delta_{c,b}B_{c|y} + \delta_{b,\varnothing}(\one-B_{c|y})$.

In the same way that Ref.~\cite{Masini2024} reformulated the guessing attacks of Ref.~\cite{Acin2016} through the notions of full and partial-input JM---thereby enabling the discovery of improved strategies---part of the motivation of the present work was to introduce a generalized notion of partial joint-measurability allowing us to reformulate the guessing attack of Ref.~\cite{Chaturvedi2024} within this new framework. We show below that exploiting this generalized framework allows us to find improved strategies for postselected scenarios, which go beyond the simple guessing attacks of Refs.~\cite{Chaturvedi2024} and~\cite{Acin2016}.

\vspace{-6pt}
\subsubsection*{(d) Partial input \& outcome JM} This case represents the intersection of cases \emph{(b)} and \emph{(c)}: Eve focuses only on the conclusive outcomes of the single measurement $y=1$. In a QKD context, this corresponds to a protocol where the raw key is extracted from the $y=1$ measurement and no-click events are discarded during key generation. This setup is exactly that of the one-sided DI QKD protocol based on steering introduced in Ref.~\cite{Branciard2012}.

By Observation~\ref{obs:1}, the bound $\eta\leq 1/2$ from case \emph{(b)} applies here as well.
However, we show that a more refined strategy can be implemented, which improves the bound to $\eta \le k/(2k-1)$, recovering the value $1/2$ in the limit $k\to\infty$. The strategy is a generalization of the one in Ref.~\cite{Acin2016} and Ref.~\cite{Chaturvedi2024}, where Eve performs a weak measurement of $B_1$ that allows her to increase the probability of correctly guessing the conclusive outcome, while Bob's device implements in the case $y\neq 1$ a probabilistic reversal of the disturbance caused by Eve's weak measurement.

First, note that, although Bob's measurement $B_1=\{B_{b|1}\}_{b=1}^k$ is generically a POVM acting on the incoming state $\rho$, we can, through Naimark's extension, represent it as a projective measurement $P=\{P_{b}\}_{b=1}^k$ acting on the joint state $\tilde \rho=\rho\otimes \ketbra{0}$, where $|0\rangle$ is an auxiliary system. We define our simulation on this extended space.

Eve implements a quantum instrument $\ins = \{\ins_c\}_{c=1}^k$ defined by the Kraus operators
\begin{equation}
	K_c = \sqrt{\eta} \, P_c + \sqrt{\frac{1-\eta}{k-1}}\, (\one - P_c)\,,
\end{equation}
where $P_c$ are the projectors of the Naimark extension of $B_1$. The associated POVM element $E_c = K_c^\dagger K_c$ represents a weak measurement of the projective measurement $P$, i.e.,
\begin{equation}
	E_c = (1-\nu)\frac{\one}{k} + \nu P_c\qquad \text{with} \quad \nu \coloneqq \frac{k\eta-1}{k-1}\,.
\end{equation}
This construction interpolates between a completely uninformative measurement ($\nu=0$, $\eta = 1/k$) and a sharp projective measurement ($\nu=1$, $\eta=1$).

Upon obtaining the outcome $c$ and the post-measurement state $\ins_c(\tilde \rho)$, Bob's device performs the measurement $M_{y,c}$. For $y=1$, the device measures $P$ and outputs $b$ if $b=c$; otherwise, it returns the no-click outcome $\varnothing$. Formally, $M_{b|1,c} = \delta_{c,b} P_c + \delta_{b,\varnothing} (\one - P_c)$. This reproduces the correct statistics of $B_{1}^\eta$.
Unlike the random guessing strategy in Ref.~\cite{Chaturvedi2024} (which corresponds to $\nu=0$, $\eta=1/k$), this weak measurement allows Eve to increase her probability of correctly guessing the conclusive outcome and hence increases $\eta$. In the limit $\nu=1$, $\eta=1$, Eve performs the sharp measurement $B_1$, as in the strategy of Ref.~\cite{Acin2016}.

However, the instrument $\ins$ introduces a disturbance to the state $\ins_c(\rho)$, which would typically bias the statistics of the measurements $y\neq 1$. To compensate for this, the device implements a two-step procedure for $y\neq 1$. First, upon learning $c$, it applies a two-valued instrument $\insi_c=\{\insi_{\text{inv}|c},\insi_{\varnothing|c}\}$,  designed to probabilistically invert the effect of $\ins_c$. If the outcome $\text{inv}$ is obtained, then Bob's device has successfully reversed the disturbance caused by $\ins_c$ and recovers the original state $\tilde \rho$; it then performs the honest measurement $B_y$ on this state and outputs the outcome obtained. Otherwise, if the outcome $\varnothing$ is obtained, the disturbance has not been successfully reversed, and the device outputs $\varnothing$. The Kraus operator for $\insi_{\text{inv}|c}$ is:
\begin{equation}
	L_{\text{inv}|c} = \sqrt{\gamma}\left(\sqrt{\frac{1}{\eta}} P_c+\sqrt{\frac{k-1}{1-\eta}} (\one - P_c)\right)\,, 
\end{equation}
where $\gamma > 0$ is a normalization constant. This ensures that $L_{\text{inv}|c} K_c = \sqrt{\gamma} \one$. For $\insi_c$ to be a valid instrument, we require that ${L^\dagger_{\text{inv}|c}} L_{\text{inv}|c} \le \one$. Since $P_c$ and $(\one-P_c)$ are mutually orthogonal projectors, this implies, $\gamma \le \eta$ and $\gamma \le (1-\eta)/(k-1)$.

Applying $\insi_{\text{inv}|c}$ to the post-measurement state $\ins_c(\tilde \rho)$ yields the unnormalized state $(\insi_{\text{inv}|c} \circ \ins_c)(\tilde \rho)= L_{\text{inv}|c} K_c \left(\tilde \rho\right) K_c^\dagger L_{\text{inv}|c}^\dagger = \gamma \tilde \rho$, and hence $\sum_{c=1}^{k} (\insi_{\text{inv}|c} \circ \ins_c)(\tilde \rho) = k\gamma \tilde \rho = k\gamma\left(\rho\otimes\ketbra{0}\right)$. This means that with probability $k\gamma$, the effect of the disturbance from the instrument $\ins$ has been successfully corrected by $\insi_c$, and the effective channel behaves like the identity channel. Bob's device then performs the ideal measurement $B_y$ ($y\neq 1$) on this state. The resulting effective POVMs are thus $\tilde B_{b|y} = k\gamma B_{b|y}$ for $b\neq \varnothing$, and $\tilde B_{\varnothing|y} = \one - \tilde B_{b|y} =(1-k\gamma)\one$. To reproduce the honest statistics $B_{b|y}^\eta$, we require $k\gamma = \eta$. This choice is compatible with the other constraints on $\gamma$ stemming from the positivity of the instrument $\insi_c$ if and only if $\eta \le k/(2k-1)$. This completes the proof of the bound for $\mathcal{G}$-JM.

\subsection{Qubit observables}\label{sec:pair_qubit_observables}
While the bounds that are presented in Table~\ref{tab:cases} are generic bounds that apply to any set of measurements $B_y$, we now specialize further to the case of qubit observables of the form 
\begin{equation}\label{eq:qubit_observables}
	B_y = \vec r_y \cdot \vec \sigma ,
	\qquad y\in \{1,\dots,n\},
\end{equation}
where $\vec r_y\in\mathbb{R}^3$ are unit Bloch vectors, $\|\vec r_y\|=1$, and $\vec\sigma=(X,Y,Z)$ is the vector of Pauli matrices. The corresponding POVM elements are the rank-one projectors
\begin{equation}\label{eq:binary_qubit_observables}
	B_{\pm|y}
	=
	\frac{1}{2}
	\left(\one \pm \vec r_y\cdot\vec\sigma\right) \eqqcolon P_{\pm\vec r_y}\,,
\end{equation}
where we label the outcomes by $b\in\{+,-\}$ for convenience.

\vspace{-6pt}
\subsubsection*{(a) Full JM}
The generic bound $\eta \le 1/n$ from Table~\ref{tab:cases} obviously applies also to qubit observables. It cannot be improved by exploiting the specific structure of the measurements~\eqref{eq:qubit_observables}, as it is known that any set of $n$ distinct such observables are fully jointly measurable if and only if $\eta \le 1/n$~\cite{Skrzypczyk2015}, regardless of their measurement directions. 

\vspace{-6pt}
\subsubsection*{(b) Partial-input JM}
The generic bound $\eta \le 1/2$ from Table~\ref{tab:cases} applies here. To show it is tight, it suffices to consider $n=2$. But for $n=2$, cases \emph{(a)} and \emph{(b)} coincide (Observations~\ref{obs:2} and~\ref{obs:3}), and the bound is then tight since any two distinct qubit observables of the form~\eqref{eq:qubit_observables} saturate the full-JM bound~\cite{Skrzypczyk2015}.

\vspace{-6pt}
\subsubsection*{(c) Partial-outcome JM}
\begin{figure}
	\centering
\includegraphics[width=0.24\textwidth]{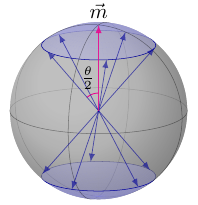}
	\caption{Illustration of the angle $\theta$ defined in Eq.~\eqref{eq:theta_qubit_case_c_general}. It is the angular aperture of the double cone that contains all the measurement axes while having the smallest aperture. This figure shows five measurement axes $\{\pm \vec r_y\}_{y=1}^5$ (blue), and the axis $\vec m$ (red) of the corresponding double cone. 
	}\label{fig:bloch_cap}
\end{figure}
For case \emph{(c)}, where Eve aims to guess only the conclusive outcomes of all measurements, we find that we can go beyond the simple guessing strategies of the previous section to obtain tighter bounds for qubit measurements of the form \eqref{eq:qubit_observables}.
We show that the corresponding lossy measurements $B_y^\eta$ are
$\mathcal{G}$-JM whenever
\begin{equation}\label{eq:bound_qubit_case_c_general}
	\eta \le \frac{1}{1+\sin(\theta/2)}\,,
\end{equation}
where $\theta$ is defined by
\begin{equation}\label{eq:theta_qubit_case_c_general}
	\frac{\theta}{2}
	\coloneqq
	\min_{\|\vec m\|=1}
	\max_{y=1,\dots,n}
	\arccos |\vec m\cdot \vec r_y|.
\end{equation}
For a fixed unit vector $\vec m$, the quantity $\arccos |\vec m\cdot \vec r_y|$ is the smaller angle between $\vec m$ and the measurement axis defined by the two antipodal directions $\{\pm \vec r_y\}$. Thus, $\max_y \arccos |\vec m\cdot \vec r_y|$ is the smallest half-aperture of a double cone with axis $\vec m$ that contains all the measurement axes. Optimizing over $\vec m$, the angle $\theta/2$ is the smallest such half-aperture, or equivalently, $\theta$ is the angular aperture of the double cone that contains all the measurement axes $\{\pm \vec r_y\}_{y=1}^n$ while having the smallest aperture, see Fig.~\ref{fig:bloch_cap}. Hence $\theta$ is close to zero when all
measurements are close to a common axis, and it is large when no single
axis is well aligned with all of them.

In the special case of two measurements, 
\begin{equation}\label{eq:mu_qubit_case_c_general}
	\theta
	=
	\arccos|\vec r_1\cdot\vec r_2|
	\in[0,\pi/2]
\end{equation}
is the angle between the two measurement axes and $\vec m$ is the bisector of the two axes.
For $\theta = \pi/2$, i.e., mutually anticommuting measurements, this yields the bound $\eta \le 2-\sqrt{2} \approx 0.586$. For $\theta \rightarrow 0$, the right-hand side goes to 1 and the bound becomes $\eta \leq 1-\theta/2$. 

More generally, for any finite number $n$ of distinct measurements, $\theta$ is strictly smaller than $\pi$, hence the bound~\eqref{eq:bound_qubit_case_c_general} is strictly larger than $1/2$, and thus always improves over the generic bound from Table~\ref{tab:cases} given by the strategy of Ref.~\cite{Chaturvedi2024}.

We now prove Eq.~\eqref{eq:bound_qubit_case_c_general} by constructing an explicit strategy for Eve. Fix a unit vector $\vec m$ and a parameter $0\le \nu<1$, which will both be fixed later. Eve first applies the two-outcome quantum instrument $\ins=\{\ins_c\}_{c\in\{\pm\}}$ defined by the Hermitian Kraus operators
\begin{equation}\label{eq:Kraus_qubit_case_c_general}
	K_{\pm}
	\coloneqq
	\sqrt{\frac{1+\nu}{2}}\,
	P_{\pm\vec m}
	+
	\sqrt{\frac{1-\nu}{2}}\,
	P_{\mp\vec m},
\end{equation}
The corresponding POVM elements are
\begin{equation}
	E_{\pm} = K_{\pm}^2
	=
	\frac{1}{2}
	\left(\one\pm \,\nu\,\vec m\cdot\vec\sigma\right)\,,
\end{equation}
corresponding to a weak measurement along the direction $\vec m$.
Given Eve's outcome $c$, her guess $g_y(c)$ for the conclusive outcome of Bob's measurement $y$ is $c$ if the Bloch vector $\vec r_y$ has positive overlap with $\vec m$ and $-c$, otherwise. More explicitly,
\begin{equation}\label{eq:guess_function_qubit_case_c_general}
	g_y(c) =
	c\,\operatorname{sgn}(\vec m\cdot\vec r_y).
\end{equation}

In order to ensure that Eve's guess is always correct whenever Bob produces a conclusive outcome, Bob's measurements, conditioned on Eve's outcome $c$, must satisfy
\begin{equation}
	M_{b|y,c}
	=
	\delta_{b,g_y(c)}\,M_{\star|y,c},
	\qquad b\in\{+,-\}.
\end{equation}
Thus, the conditional measurement performed by Bob's device is effectively a two-outcome measurement with outcomes $\{\star,\varnothing\}$: if the outcome $\star$ is obtained, Bob outputs $b=g_y(c)$, while if the outcome $\varnothing$ is obtained, Bob outputs the inconclusive outcome $\varnothing$.
We take
\begin{equation}\label{eq:Bob_measurements_qubit_case_c_general}
	M_{\star|y,c}
	=
	\eta\,K_c^{-1} B_{g_y(c)|y} K_c^{-1},\,
	M_{\varnothing|y,c}
	=
	\one-M_{\star|y,c},
\end{equation}
where
\begin{equation}\label{eq:Krausinv_qubit_case_c_general}
	K_{\pm}^{-1}
	\coloneqq
	\sqrt{\frac{2}{1+\nu}}\,
	P_{\pm\vec m}
	+
	\sqrt{\frac{2}{1-\nu}}\,
	P_{\mp\vec m}
\end{equation}
are the inverses of the Kraus operators $K_{\pm}$. 
This construction reproduces the desired lossy statistics. Indeed, for $b\in\{+,-\}$, the only value of $c$ contributing to $M_{b|y,c}$ is 
$c= b\,\operatorname{sgn}(\vec m\cdot\vec r_y)=\tilde c $
and therefore $\sum_c \Tr[K_c \rho K_c^\dagger M_{b|y,c}]= \Tr[ K_{\tilde c} \rho K_{\tilde c}^\dagger M_{\star|y,\tilde c}] = \eta \Tr[\rho B_{b|y}]$. 
The inconclusive statistics are then automatically reproduced by normalization.

It remains to ensure that Eq.~\eqref{eq:Bob_measurements_qubit_case_c_general} defines valid POVM elements. This reduces to the constraint that $M_{\star|y,c}\le \one$, which can be satisfied by choosing $\eta$ sufficiently small. Since the operators $B_{b|y}$ are rank-one projectors, the operators $M_{\star|y,c}$ are also rank-one, and hence their unique non-zero eigenvalue is given by
\begin{align}
\Tr[M_{\star|y,c}]  & =  \eta \left(\frac{2}{1+\nu} \Tr[P_{c \vec m} P_{g_{y}(c)\vec r_y}]\right.\nonumber \\
&\quad + \left.\frac{2}{1-\nu} \Tr[P_{-c \vec m} P_{g_{y}(c)\vec r_y}]\right)\nonumber \\
&  = \eta\left(\frac{1+|\vec m \cdot \vec r_y|}{1+\nu} + \frac{1-|\vec m \cdot \vec r_y|}{1-\nu} \right)\nonumber\\
& = \eta\times\frac{2(1-\nu |\vec m \cdot \vec r_y|)}{1-\nu^2}.
\end{align}
Thus, the condition $M_{\star|y,c}\le\one$ is equivalent to
\begin{equation}\label{eq:constraint_qubit_case_c_general_y}
	\eta
	\le
	\frac{1-\nu^2}{2(1-\nu |\vec m \cdot \vec r_y|)},
\end{equation}
for all $y$. Defining
\begin{equation}
	\mu(\vec m)
	\coloneqq
	\min_{y=1,\dots,n}
	|\vec m\cdot\vec r_y|,
\end{equation}
we obtain the sufficient condition
\begin{equation}\label{eq:constraint_qubit_case_c_general_m}
	\eta
	\le
	\frac{1-\nu^2}
	{2\left(1-\nu\,\mu(\vec m)\right)}.
\end{equation}
For fixed $\vec m$, the right-hand side of Eq.~\eqref{eq:constraint_qubit_case_c_general_m} is maximized over $0\le\nu<1$ by choosing
\begin{equation}\label{eq:nu_qubit_case_c_general}
	\nu
	=
	\frac{1-\sqrt{1-\mu(\vec m)^2}}
	{\mu(\vec m)}
	=
	\frac{\mu(\vec m)}
	{1+\sqrt{1-\mu(\vec m)^2}}.
\end{equation}
This gives
\begin{equation}
	\eta
	\le
	\frac{1}
	{1+\sqrt{1-\mu(\vec m)^2}}.
\end{equation}
Finally, the best measurement direction $\vec m$ is the one maximizing $\mu(\vec m)$, corresponding to $\mu = \cos(\theta/2)$ for $\theta/2$ defined in Eq.~\eqref{eq:theta_qubit_case_c_general}, which yields the bound~\eqref{eq:bound_qubit_case_c_general}.

In the case $n=2$, we found numerically, by solving the $\mathcal{G}$-JM SDP for a large number of angles $\theta \in [0,\pi/2]$ that the bound~\eqref{eq:bound_qubit_case_c_general} is tight up to numerical precision. By Observation~\ref{obs:2}, this implies also that the general bound \eqref{eq:bound_qubit_case_c_general} is tight for $n$ observables when their Bloch axes are all contained in a double cone of angular aperture $\theta \leq \pi/2$.

For values of $\theta\in \, ]\pi/2,\pi]$, we tried to saturate numerically the bound~\eqref{eq:bound_qubit_case_c_general} using a small set of $n$ observables parametrized as
\begin{equation}\label{eq:qubit_meas_case_c}
\vec r_y = (\sin(\theta/2) \cos(\phi_y), \sin(\theta/2) \sin(\phi_y), \cos(\theta/2))	\,,
\end{equation}
with $\phi_y =  2\pi (y-1)/n$.
These measurement axes are placed symmetrically around the double cone with axis $\vec m = \hat z$ and angular aperture $\theta$.

As already mentioned, with $n=2$ such measurements, the bound~\eqref{eq:bound_qubit_case_c_general} is tight for $\theta \le \pi/2\approx 1.57$. With $n=3$ observables of the above form, we can numerically saturate~\eqref{eq:bound_qubit_case_c_general} for $\theta \le 2\arccos(1/\sqrt{3})\approx 1.91$. Note that the observables~\eqref{eq:qubit_meas_case_c} with such a value of $\theta$ correspond, up to a change of basis, to the mutually anticommuting $\{X,Y,Z\}$ observables. We also observe numerically that as we increase the number of measurements $n$, the bound~\eqref{eq:bound_qubit_case_c_general} appears to be tight for a larger range of $\theta$, up to $\theta \lesssim 1.952$ with $n=9$ observables. Whether the bound~\eqref{eq:bound_qubit_case_c_general} is tight for all angles $\theta$ remains an open question.

\vspace{-6pt}
\subsubsection*{(d) Partial input \& outcome JM} \label{sec:qubit_case_d}
For case \emph{(d)}, where Eve is required to guess the conclusive outcomes of the first measurement $y=1$ only, the generic bound provided in Table~\ref{tab:cases} reduces to $\eta \le 2/3$ for the case of the qubit observables \eqref{eq:qubit_observables} since $k=2$. However, we can improve this bound using the specific structure of the qubit measurements. 

In the case of $n=2$ measurements separated by an angle $\theta \in [0,\pi/2]$ on the Bloch sphere, we find that the bound can be improved to 
\begin{equation}\label{eq:bound_qubit_case_d_n2}
	\eta \le \frac{2}{2+\sin(\theta)}.
\end{equation}
For $\theta = \pi/2$, this recovers the generic bound $\eta \le 2/3$, while for $\theta < \pi/2$, this is a strictly better bound, with the right-hand side going to 1 for $\theta \rightarrow 0$. We find numerically by solving the $\mathcal{G}$-JM SDP for a large number of values of $\theta \in [0,\pi/2]$ that, up to machine precision, this bound is tight. 

In the case of an arbitrary number $n$ of qubit observables, we find the bound
\begin{equation}\label{eq:bound_qubit_case_d_n}
	\eta \le \frac{1+\sin(\theta)}{1+2\sin(\theta)},
\end{equation}
where
\begin{equation}\label{eq:theta_qubit_case_d_general}
	\theta \coloneqq \max_{y=2,\ldots,n} \arccos |\vec r_1\cdot\vec r_y|,
\end{equation}
with $\theta\in[0,\pi/2]$. Here $\theta$ is the largest angle between
the first measurement axis $\{\pm\vec r_1\}$ and the remaining
measurement axes $\{\pm\vec r_y\}_{y=2}^{n}$, see Fig.~\ref{fig:bloch_cap2}. For $\theta = \pi/2$,  this recovers again the generic bound $\eta \le 2/3$, but improves it for $\theta < \pi/2$. We find strong numerical evidence that this bound is tight using only three measurements of the form $\vec r_1 = \hat z$, $\vec r_2 = \sin \theta \hat x + \cos \theta \hat z$ and $\vec r_3 = -\sin \theta \hat x + \cos \theta \hat z$ by solving the corresponding SDP for a large number of values of $\theta \in [0,\pi/2]$. Hence, using Observation~\ref{obs:2}, we conclude that the above bound is tight for any number of measurements $n$ as long as all measurement axes are contained in a double cone of angular aperture $\theta$ around the axis defined by the first measurement.
\begin{figure}
	\centering
\includegraphics[width=0.24\textwidth]{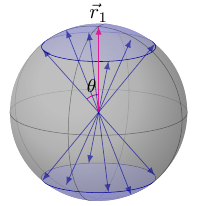}
	\caption{Illustration of the angle $\theta$ defined in Eq.~\eqref{eq:theta_qubit_case_d_general}. It is the largest angle between
	the first measurement axis $\{\pm\vec r_1\}$ (red) and the remaining measurement axes $\{\pm\vec r_y\}_{y=2}^{n}$ (blue), shown here for $n=7$. Equivalently, $2\theta$ is the angular aperture of the double cone with axis $\vec r_1$ that contains all the measurement axes while having the smallest aperture. 
	}\label{fig:bloch_cap2}
\end{figure}

To prove these bounds, we construct explicit strategies, which are initially identical to the one for case \emph{(c)}: Eve applies a two-outcome instrument \eqref{eq:Kraus_qubit_case_c_general} corresponding to a weak measurement along a direction $\vec m$ with strength $\nu$. Her guess for the conclusive outcome of the first measurement $y=1$ is chosen as before as $g_1(c) = c\,\operatorname{sgn}(\vec m\cdot\vec r_1)$ (However, note that the guess functions $g_y(c)$ for $y\neq 1$ are not relevant here since Eve does not need to guess the outcomes of these measurements).

Bob's conditional measurement for $y=1$ is then, as before,
\begin{equation}\label{appeq:Bob_measurement_y1_case_d_general}
	M_{b|1,c}
	=
	\delta_{b,g_1(c)}M_{\star|1,c},\quad
	M_{\star|1,c}
	=
	\eta K_c^{-1}B_{g_1(c)|1}K_c^{-1},
\end{equation}
and
\begin{equation}
	M_{\varnothing|1,c}
	=
	\one-M_{\star|1,c}.
\end{equation}
This guarantees that, whenever Bob outputs a conclusive outcome for
$y=1$, Eve's guess is correct, and furthermore that the correct lossy statistics are reproduced for $y=1$. As before, this forms a valid POVM if
\begin{equation}\label{eq:constr1}
	\eta \leq \frac{1-\nu^2}{2(1-\nu |\vec{m}\cdot \vec{r}_1|)} \eqqcolon F_{\nu,t_1}\,,
\end{equation}
where $F_{\nu,t} \coloneqq (1-\nu^2)/[2(1-\nu t)]$ and we introduce the notation
\begin{equation}
	t_y \coloneqq |\vec m \cdot \vec r_y|\,,
\end{equation}
that will be used repeatedly in the following.

For the other measurements, $y\neq1$, since Eve does not need to guess the outcomes, we have more freedom in choosing the operators $M_{b|y,c}$. A natural and simple choice is to make them proportional to $\eta  K^{-1}_{c} B_{b|y} K^{-1}_{c}$ for $b \neq \varnothing$, i.e.,
\begin{align}
	M_{b|y,c}=
	q_{b,c|y}\,\eta K_c^{-1}B_{b|y}K_c^{-1},
	\qquad b\in\{+,-\},
\end{align}
and 
\begin{equation}
	M_{\varnothing|y,c}=\one-M_{+|y,c}-M_{-|y,c},
\end{equation}
where $q_{b,c|y}\geq 0$. We further impose $\sum_{c} q_{b,c|y}=1$ for every fixed $b$ and $y$, so that $\sum_{c} \Tr[K_c \rho K_c^\dagger M_{b|y,c}] = \eta (\sum_c q_{b,c|y}) \Tr[\rho B_{b|y}] = \eta \Tr[\rho B_{b|y}]$ and hence the correct lossy statistics are reproduced for $y\neq 1$. The above operators form valid POVM elements as long as $M_{+|y,c}+M_{-|y,c}\leq \one$ for all $y\neq 1$ and $c$. We have
\begin{align}\label{eq:kck}
	M_{+|y,c}+M_{-|y,c}
	&=
	\eta K_c^{-1} \left(\sum_{b=\pm} q_{b,c|y} B_{b|y}\right) K_c^{-1} \nonumber\\
	&=
	\eta K_c^{-1} C_{y,c} K_c^{-1}\,,
\end{align}
where we defined $C_{y,c} \coloneqq \sum_{b} q_{b,c|y} B_{b|y}$.
To simplify this expression, we further choose $q_{b,c|y}$ such that $\sum_{b} q_{b,c|y} =1$ for every fixed $c$ and $y$. Thus, for fixed $y$, $q_{b,c|y}$ is a doubly stochastic matrix, which is therefore defined by a single parameter $\gamma_y\in [0,1]$, allowing us to write in full generality
\begin{equation}
	q_{b,c|y} = \frac{1}{2} \left(1 + b \, g_y(c)\gamma _y\right)\,,
\end{equation}
with $g_y(c)=c\,\operatorname{sgn}(\vec m\cdot\vec r_y)$.
We then have
\begin{equation}
	C_{y,c} = \sum_{b=\pm} q_{b,c|y} B_{b|y} = \frac{1}{2}\left(\one + \gamma_y g_y(c) \vec r_y \cdot \vec \sigma\right),
\end{equation}
and the Bloch vector of $C_{y,c}$ is aligned with that of $B_{g_y(c)|y}$ and has length $\gamma_y$. 
One can then  easily check that the trace of the operator~\eqref{eq:kck} is $2 \eta A_y/(1-\nu^2)$, where
\begin{equation}
A_y = 1 - \nu \gamma_y |\vec m \cdot \vec r_y| = 1 - \nu \gamma_y t_y,
\end{equation}
while its determinant is
\begin{equation}
	\begin{aligned}
	\det[\eta K_c^{-1} C_{y,c} K_c^{-1}] &= \eta^2 \det[(K_c^{-1})^2] \det[C_{y,c}]\,\\
	&=  \eta^2 \,\frac{4}{1-\nu^2} \, \frac{1-\gamma_y^2}{4}\\
	&= \eta^2 \frac{1-\gamma_y^2}{1-\nu^2}\,,
	\end{aligned}
\end{equation}
where we used $\det[\one+ \vec u \cdot \vec \sigma] = 1- \vec u \cdot \vec u$ for any vector $\vec u$. This implies that the largest eigenvalue of $M_{+|y,c}+M_{-|y,c}$ is
\begin{equation}
	\lambda_{\max}^{(y)} = \eta \frac{A_y+\sqrt{A_y^2-(1-\nu^2)(1-\gamma_y^2)}}{1-\nu^2}.
\end{equation}
Thus, the constraint from measurement $y\neq 1$ is
\begin{equation}
	\eta \leq \frac{1-\nu^2}{A_y+\sqrt{A_y^2-(1-\nu^2)(1-\gamma_y^2)}}.
\end{equation}
For fixed $\nu$ and $t_y$, we want to maximize the right-hand side over the choice of $\gamma_y$, so that
\begin{equation}\label{eq:constr2}
\eta\leq G_{\nu,t_y} \coloneqq \max_{0\le \gamma_y \le 1} \frac{1-\nu^2}{A_y+\sqrt{A_y^2-(1-\nu^2)(1-\gamma_y^2)}}\,.
\end{equation}
Setting the derivative of the denominator of the right-hand side to zero yields the stationary point $\gamma_y = \nu t_y / (1-\nu \sqrt{1-t_y^2}\,) \ge 0$. 
Imposing further that $\gamma_y \le 1$, we obtain
\begin{equation}
	G_{\nu,t_y}
	=
	\begin{cases}
	1-\nu\sqrt{1-t_y^2},
	&
	\text{if } \nu\left(t_y+\sqrt{1-t_y^2}\right)\le1,
	\\[1.2ex]
	\dfrac{1-\nu^2}{2(1-\nu t_y)},
	&
	\text{if } \nu\left(t_y+\sqrt{1-t_y^2}\right)\ge1.
	\end{cases}
\end{equation}
Combining the constraint \eqref{eq:constr1} from $y=1$ and \eqref{eq:constr2} from $y\neq 1$, we therefore obtain the following general sufficient condition for $\mathcal{G}$-JM	
\begin{equation}\label{appeq:general_finite_n_case_d_bound}
	\eta
	\le
	\max_{\|\vec m\|=1,\,0\le\nu<1}
	\min\left\{
		F_{\nu,t_1	},
		\min_{y\neq1}G_{\nu,t_y}
	\right\}.
\end{equation}

Let us now specialize to the case of $n=2$ measurements that are separated by an angle $\theta\in[0,\pi/2]$. Without loss of generality, take
\begin{equation}
	\vec r_1=\hat z,
	\qquad
	\vec r_2=\cos(\theta)\,\hat z+\sin(\theta)\,\hat x.
\end{equation}
We choose $\vec m$ in the plane spanned by $\vec r_1$ and $\vec r_2$,
making an angle $x\in[0,\theta]$ with $\vec r_1$:
\begin{equation}
	\vec m
	=
	\cos(x)\,\hat z+\sin(x)\,\hat x.
\end{equation}
Then, optimizing over the angle $x$ and $\nu$ gives, as shown in Appendix~\ref{app:qubit_case_d}, the bound~\eqref{eq:bound_qubit_case_d_n2}.

Finally, to derive the simple closed-form bound~\eqref{eq:bound_qubit_case_d_n} for an arbitrary number of measurements in the double cone of angular aperture $2\theta$ around the first measurement axis (see Fig.~\ref{fig:bloch_cap2}), we choose
\begin{equation}
	\vec m=\vec r_1.
\end{equation}
Optimizing over $\nu$ then gives the bound~\eqref{eq:bound_qubit_case_d_n}, as shown in Appendix~\ref{app:qubit_case_d}.

Note that for $n=2$, the bound \eqref{eq:bound_qubit_case_d_n} is more conservative than the bound \eqref{eq:bound_qubit_case_d_n2}, since
\begin{equation}
	\frac{2}{2+\sin(\theta)}
	-
	\frac{1+\sin(\theta)}{1+2\sin(\theta)}
	=
	\frac{\sin(\theta)(1-\sin(\theta))}
	{(2+\sin(\theta))(1+2\sin(\theta))}
	>0.
\end{equation}
This difference originates from the fact that the $n=2$ construction tunes
the weak-measurement direction $\vec m$ to the specific pair of
measurements, whereas the one for arbitrary $n$ fixes $\vec m=\vec r_1$ and
works uniformly for all measurements whose angle with $\vec r_1$ is at most $\theta$.

\subsection{Impact of finite visibility}
\begin{figure}
    \centering
    \begin{subfigure}{0.3\textwidth}
        \centering
        \includegraphics[width=\textwidth]{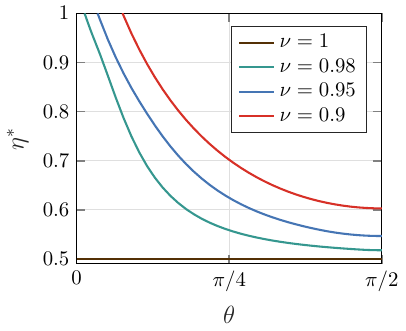}
         \caption*{Cases~\emph{(a)} and~\emph{(b)}}
        \end{subfigure}
		\hfill
        \begin{subfigure}{0.3\textwidth}
        \centering
        \includegraphics[width=\textwidth]{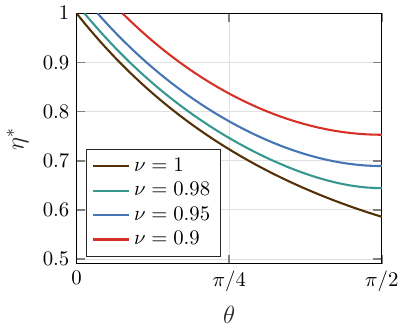}
        \caption*{Case~\emph{(c)}}\label{fig:case_c}
        \end{subfigure}
		\hfill
        \begin{subfigure}{0.3\textwidth}
        \centering
        \includegraphics[width=\textwidth]{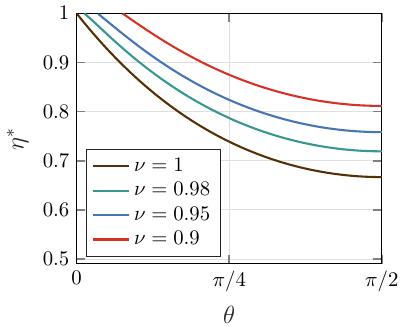}
        \caption*{Case~\emph{(d)}}\label{fig:case_d}
        \end{subfigure}
    \caption{The maximum detection efficiency $\eta^* $ for which the lossy measurements $\{Z, \cos(\theta) Z+ \sin(\theta) X\}$ are $\mathcal{G}$-JM at different values of visibility $\nu$.}
    \label{fig:eta_vs_angle_vis}
\end{figure}
So far, we have considered detection loss as the only source of imperfection. We now also take into account the effect of finite visibility. We can model the combined effect of detection loss and finite visibility by modifying the ideal POVM elements $B_{b|y}$ into effective POVM elements $B^{\eta,\nu}_{b|y}$ given by
\begin{equation}
	B^{\eta,\nu}_{b|y} = 
	\begin{cases}
		\eta \, \nu \, B_{b|y} + \eta \, (1-\nu) \, t_{b|y} \one\,, \quad & b \neq \varnothing\,,\\
		(1-\eta) \one\,,  &b = \varnothing\,,
	\end{cases}
\end{equation}
where $\eta$ is the detection efficiency, $\nu$ the visibility, and $t_{b|y} := \Tr(B_{b|y})/\Tr(\one)$. The maximum detection efficiency $\eta^*$ for which the measurements $B^{\eta,\nu}_y$ are $\mathcal{G}$-JM for the four cases considered above can be computed by solving the SDP in Def.~\ref{defn:partial-joint-measurability2} for different values of $\theta$ and $\nu$. We plot the results in Fig.~\ref{fig:eta_vs_angle_vis} for two measurements of the form $\{Z, \cos(\theta) Z+ \sin(\theta) X\}$ separated by an angle $\theta$ in the Bloch sphere. Note that, as explained previously, cases \emph{(a)} and \emph{(b)} are equivalent. As expected, the values of the threshold efficiency $\eta^*$ for cases \emph{(c)} and \emph{(d)} are higher since postselection only makes it easier to guess the measurement outcomes.

\subsection{Application to the security of the one-sided DI QKD protocol of ~\cite{Branciard2012}}\label{sec:Branciard}

As an application of our results, we show a counterexample to the security proof of the one-sided DIQKD protocol proposed in~\cite{Branciard2012}, which is based on quantum steering. In this protocol, only one party is assumed to have a characterized measurement apparatus.

In this setup, the trusted party, Alice\footnote{Note that we have swapped the roles of Alice and Bob with respect to Ref.~\cite{Branciard2012}.}, performs projective measurements $A_1$ and $A_2$ corresponding to the qubit observables $Z$ and $X$. The untrusted party, Bob, chooses between two binary measurements $B_1$ and $B_2$, which in the honest implementation are also $Z$ and $X$ basis measurements. The secret key is extracted from the conclusive (click) outcomes of the $A_1$ and $B_1$ settings, while no-click events are discarded. (The no-click events are still retained for parameter estimation).
 Since Alice's device is trusted, her non-detection events cannot be manipulated by an adversary; however, Eve is free to exploit the non-detection events on Bob's side to gain information about the conclusive outcomes.

As established in Sec.~\ref{sec:Eve_attack}, if Bob's measurements are $\mathcal{G}$-jointly measurable for $\mathcal{G}_1=\{+,-\}$ and $\mathcal{G}_2=\emptyset$, then Eve can obtain a perfect copy of Bob's conclusive outcomes for the $B_1$ measurement. In such a case, no secret key can be extracted in the protocol of~\cite{Branciard2012}.
As shown in Sec.~\ref{sec:pair_qubit_observables}, the qubit observables $\{Z,X\}$ are $\mathcal{G}$-JM whenever the detection efficiency satisfies $\eta \le 2/3$. 

This result contradicts the security analysis in Ref.~\cite{Branciard2012}, where it is claimed that the protocol remains secure for $\eta > 0.659$. We identify the source of this discrepancy in the treatment of postselection within their security proof. 

To prove security, the authors of Ref.~\cite{Branciard2012} seek to bound the smooth min-entropy $H^{\epsilon}_{\text{min}}(\mathsf{A}^{\text{ps}}|\mathsf{E}')$ of Alice's postselected bit-string $\mathsf{A}^{\text{ps}}$ conditioned on Eve's side information $\mathsf{E}'$. The postselected string $\mathsf{A}^{\text{ps}}$ consists of the outcomes of rounds where both Alice and Bob performed the first measurement ($A_1, B_1$) and Bob obtained a conclusive outcome ($b \neq \varnothing$). The authors utilize the following inequality:
\begin{equation}
	H^{\epsilon}_{\text{min}}(\mathsf{A}^{\text{ps}}|\mathsf{E}') \ge H^{\epsilon}_{\text{min}}(\mathsf{A}|\mathsf{E}') - (N-n)\,,
\end{equation}
where $\mathsf{A}$ is Alice's full string before postselection, $N$ is the total number of rounds, and $n$ is the number of postselected rounds. Here, $\mathsf{E}'=(\mathsf{E},\mathsf{C})$ represents the total information available to Eve, consisting of her initial side information $\mathsf{E}$ and the public announcement $\mathsf{C}$ of which rounds were discarded on Bob's side. 

The flaw in this analysis arises when bounding $H^{\epsilon}_{\text{min}}(\mathsf{A}|\mathsf{E}')$. The authors apply a generalized uncertainty relation~\cite{Tomamichel2011}, but they condition the entropy on the information $\mathsf{E}$ that Eve possesses \emph{before} the postselection process. This would not be a concern if conditioning the entropy of Alice's full string $\mathsf{A}$ (before postselection) on $\mathsf{E}$ (Eve's information before postselection) or $\mathsf{E}'$ (Eve's information after postselection) were equivalent, i.e., if $H^{\epsilon}_{\text{min}}(\mathsf{A}|\mathsf{E}) = H^{\epsilon}_{\text{min}}(\mathsf{A}|\mathsf{E}')$. This might seem plausible at first glance: since the postselection process only involves Bob's measurement outcomes, the public announcement $\mathsf{C}$ of which rounds gave conclusive or inconclusive outcomes on Bob's side does not seem, intuitively, to help Eve guess Alice's full string. However, such intuition is not necessarily valid. Indeed, we show in Appendix~\ref{app:postselection} an explicit example where $H^{\epsilon}_{\text{min}}(\mathsf{A}|\mathsf{E}') < H^{\epsilon}_{\text{min}}(\mathsf{A}|\mathsf{E})$ following the attacks for Eve described in cases \emph{(c)} and \emph{(d)} of Sec.~\ref{sec:pair_qubit_observables}.

This highlights a broader point regarding the role of no-click events in DIQKD. For instance, in Ref.~\cite{Chaturvedi2024}, where the authors show that an adversary can guess every conclusive outcome whenever $\eta \le 1/k$, they claim that ``in QKD protocols this is enough to know all the generated key''. This is obviously true if the key is generated solely from conclusive outcomes. However, if the key generation process incorporates no-click events, a secure positive key could still be extracted. For instance, in the protocol of Ref.~\cite{Branciard2012}, an adversary can perfectly guess the conclusive outcomes of Bob whenever $\eta \le 2/3$ if no-click events are discarded. However, it follows from Refs.~\cite{Masini2024one,LeRoy25,Sekatski2025} that such steering-based protocols can extract a secret key for $\eta > 0.5$ provided that the raw key incorporates the no-click events. We provide a simple example in App.~\ref{app:postselection} illustrating how including non-detection events can preserve secrecy even when conclusive outcomes are perfectly known to the adversary.

\section{Discussion} \label{sec:discussion}

We introduced a generalized notion of partial joint-measurability ($\mathcal{G}$-JM) that extends the framework of Ref.~\cite{Masini2024} by allowing a fine-grained specification, for each measurement setting, of which subset of outcomes must be classically determined. This generalization naturally captures, e.g., the postselection of data that arises in certain implementations of (semi-)device-independent quantum cryptographic protocols, where no-click events are discarded during key generation.

We established an alternative, equivalent formulation of $\mathcal{G}$-JM in terms of partial parent (PP) POVMs, which admits a direct SDP characterization. This makes the question of whether a given set of measurements is $\mathcal{G}$-JM efficiently decidable. We further proved that $\mathcal{G}$-JM has a clean operational interpretation in the adversarial setting: an adversary Eve restricted to classical side information (i.e., without quantum memory) can perfectly predict the outcomes of an untrusted measurement device within the subset $\mathcal{G}_y$ for all input states if and only if the measurements are $\mathcal{G}$-JM. This equivalence generalizes the well-known connection between full joint-measurability and the absence of randomness.

The notion of $\mathcal{G}$-joint-measurability provides a practical and efficiently computable criterion for determining when an untrusted measurement device is entirely useless for DI or semi-DI applications. It complements the existing toolkit of attacks based on local hidden variable models, convex decompositions, and SDP bounds on key rates. Because it depends only on the measurements performed by a single untrusted node—independently of the state shared between the parties—it is particularly valuable in asymmetric scenarios such as one-sided DI protocols, where only one party is assumed to be trusted. 

Several questions remain open. First, extending our results concerning the qubit observables to higher-dimensional measurements is an interesting open problem. Second, while we have focused on the case of a single untrusted measurement device (Bob's side), it would be valuable to develop a corresponding theory for scenarios where both parties are untrusted. Third, $\mathcal{G}$-JM provides a simple, measurement-only attack. A natural next step is to combine it with complementary attack strategies—such as convex combination attacks~\cite{Lukanowski2023} or attacks exploiting the state shared between Alice and Bob -- to obtain tighter upper bounds on key rates for specific protocols. Finally, the connection between $\mathcal{G}$-JM and other resource theories of measurement incompatibility~\cite{Uola2019,Buscemi2020} deserves further investigation.

\section{Acknowledgements} 
We acknowledge funding from the QuantERA II Programme that has received funding from the European Union's Horizon 2020 research and innovation programme under Grant Agreement No 101017733 and the F.R.S-FNRS Pint-Multi programme under Grant Agreement R.8014.21, from the European Union's Horizon Europe research and innovation programme under the project ``Quantum Security Networks Partnership'' (QSNP, grant agreement No 101114043), from the F.R.S-FNRS through the PDR T.0171.22, from the FWO and F.R.S.-FNRS under the Excellence of Science (EOS) programme project 40007526, from the FWO through the BeQuNet SBO project S008323N, from the Belgian Federal Science Policy through the contract RT/22/BE-QCI and the EU ``BE-QCI'' program.
    
S.P. is a Research Director of the Fonds de la Recherche Scientifique - FNRS.
E.P.L. acknowledges support from the Fonds de la Recherche Scientifique - FNRS through a FRIA grant.

Funded by the European Union. Views and opinions expressed are however those of the authors only and do not necessarily reflect those of the European Union. The European Union cannot be held responsible for them.

\bibliography{references.bib}

\appendix

\onecolumngrid

\section{Equivalence between the two formulations of partial joint-measurability}\label{app:equivalence_PJM}

Here, we show that the formulation of partial joint-measurability in terms of effective POVMs ${E}_{c,b|y}$ is equivalent to Def.~\ref{defn:partial-joint-measurability}. Given the instrument $\ins$ and the measurement $M_{y,c}$ in Def.~\ref{defn:partial-joint-measurability}, 
the effective measurement operators ${E}_{c,b|y}$ are defined as
\begin{equation}
	E_{c,b|y} \coloneqq \ins_{c}^\dagger ({M}_{b|y,c})\,,
\end{equation}
which, as shown in Sec.~\ref{sec:sdp_reformulation}, satisfy the conditions
\begin{align}
	&\sum_{b} E_{c,b|y}=E_c \quad \forall y \in  [n]&\text{(no-signaling)}\,, \label{appeq:no-signaling} \\
	&B_{b|y} = \sum_c E_{c,b|y}&\text{(consistency)}\,, \label{appeq:consistency}\\
	&E_{c,b|y} = p(b|y,c)\, E_{c,\star|y} \quad \forall b \in \mathcal{G}_y &\text{(partial JM)}\,,\label{appeq:partial-JM}
\end{align}
for some positive operator $E_{c,\star|y} := E_c - \sum_{b \in \overline{\mathcal{G}}_y} E_{c,b|y}$
independent of $b$, and where $p(b|y,c)$ is a conditional probability distribution satisfying $p(b|y,c)\ge 0$ and $\sum_{b \in \mathcal{G}_y} p(b|y,c) = 1$.

Conversely, for any set of POVMs $E_{y}=\{{E}_{c,b|y}\}_{b,c}$ that satisfies the conditions~\eqref{appeq:no-signaling}--\eqref{appeq:partial-JM}, one can always construct a quantum instrument $\ins=\{\ins_c\}_c$ and measurements $M_{y,c}=\{M_{b|y,c}\}_{b}$ satisfying the conditions of Def.~\ref{defn:partial-joint-measurability}.

To see this, decompose the Hilbert space into a direct sum of the support and kernel of $E_c$, i.e., 
$\mathcal{H} = \mathrm{supp}(E_c)\oplus \ker(E_c)$, and let 
$\Pi_{E_c}$ and $\Pi^\perp_{E_c}$ denote the projectors onto 
$\mathrm{supp}(E_c)$ and $\ker(E_c)$, respectively. 
Define the instrument and POVMs as 
\begin{equation}\label{appeq:real_attack_instrument}
		\ins_c(\cdot)\coloneqq \sqrt{E_c}\,(\,\cdot\,) \sqrt{E_c} \quad \text{and} \quad
		M_{b|y,c} \coloneqq E_c^{-1/2} \,  {E}_{c,b|y} \, E_c^{-1/2} \,,
\end{equation}
where $E_c^{-1/2}$ denotes the Moore-Penrose pseudo-inverse of $\sqrt{E_c}$. The operators $M_{b|y,c}$ are positive and satisfy $\sum_{b} M_{b|y,c} = E_c^{-1/2} \,  \sum_{b} {E}_{c,b|y} \, E_c^{-1/2} = \Pi_{E_c}$. They form valid POVMs on the subspace $\mathrm{supp}(E_c)$ since the states $\ins_c(\rho)$ lie entirely within this support, implying that  $\Pi_{E_c}$ acts as the identity operator on the image of $\ins_c$.

Moreover, it is straightforward to verify that conditions~\eqref{eq:right_prob} and~\eqref{eq:partial-JM-constraint} are satisfied. Indeed, we have for all $y \in  [n]$ and $b \in [k_y]$,
\begin{equation}
	\sum_c \Tr[\ins_{c}(\rho) {M}_{b|y,c}] = \sum_c \Tr[\rho \Pi_{E_c} E_{c,b|y} \Pi_{E_c}] = \sum_c \Tr[\rho E_{c,b|y}] = \Tr[\rho B_{b|y}], 
\end{equation}
where we used that each $E_{c,b|y}$ lies in the support of $E_c$\,, and 
\begin{equation}
	M_{b|y,c} = p(b|y,c) M_{\star|y,c} \quad \forall b \in \mathcal{G}_y \,,
\end{equation}
where $M_{\star|y,c} := E_c^{-1/2} E_{c,\star|y} E_c^{-1/2}$ is a positive operator satisfying $M_{\star|y,c} = \Pi_{E_c} - \sum_{b\in \overline{\mathcal{G}}_y} M_{b|y,c}$ which matches the condition in Def.~\ref{defn:partial-joint-measurability} (with $\Pi_{E_c}$ playing the role of the identity).

\section{Partial joint-measurability with deterministic response functions}\label{app:proof:deterministic_response}
Here we show that if a set of measurements $\{B_y\}_{y\in [n]}$ is $\mathcal{G}$-JM, then there exist POVM elements $E_{\bar{\beta},b|y}$, where $\bar{\beta} = (\beta_1,\ldots,\beta_n)$ with $\beta_y \in \mathcal{G}_y$ (and $\beta_y=\varnothing$ if $\mathcal{G}_y=\varnothing$), such that
\begin{align}
	&\sum_{b} E_{\bar{\beta},b|y} = E_{\bar{\beta}} \quad \forall y \in  [n]\,, &\text{(no-signaling)}\,, \label{appeq:no-signalingSDP}\\
	&B_{b|y} = \sum_{\bar{\beta}} E_{\bar{\beta},b|y}\,,&\text{(consistency)}\,, \label{appeq:consistencySDP} \\
	&E_{\bar{\beta},b|y} = \delta_{b,\beta_y}\,E_{\bar{\beta},\star|y} \quad \forall b \in \mathcal{G}_y\,,&\text{(partial JM)} \label{appeq:partial-JM_SDP}\,, 
\end{align}
for some positive operator $E_{\bar{\beta},\star|y} := E_{\bar{\beta}} - \sum_{b \in \overline{\mathcal{G}}_y} E_{\bar{\beta},b|y}$.
\begin{proof}
	Since the set of measurements $\{B_y\}_{y \in  [n]}$ is $\mathcal{G}$-JM, there exist POVM elements $E'_{c,b|y}$ and response functions $p(b|y,c)$ satisfying conditions~\eqref{appeq:no-signaling}--\eqref{appeq:partial-JM}.
Define the POVM elements $E_{\bar{\beta},b|y}$ as
\begin{align}
	E_{\bar{\beta},b|y} &\coloneqq \sum_c p(\bar{\beta}|c)\,E'_{c,b|y} \quad \forall b \in \overline{\mathcal{G}}_y\,,\\
	E_{\bar{\beta},b|y} &\coloneqq\delta_{b,\beta_y} \sum_c p(\bar{\beta}|c) \, E'_{c,\star|y} \quad \forall b \in \mathcal{G}_y\,,
\end{align}
where
$p(\bar{\beta}|c) \coloneqq \prod_{y:\,\mathcal{G}_y\neq\emptyset} p(\beta_y|y,c)$ (if $\mathcal{G}_y = \emptyset$ for some $y$, we simply set $\beta_y = \emptyset$). It is straightforward to verify that the operators $E_{\bar{\beta},b|y}$ satisfy the conditions~\eqref{appeq:no-signalingSDP}--\eqref{appeq:partial-JM_SDP} whenever the operators $E'_{c,b|y}$ satisfy the conditions~\eqref{appeq:no-signaling}--\eqref{appeq:partial-JM}. Indeed, condition~\eqref{appeq:no-signalingSDP} is satisfied since
\begin{equation}
	\forall y \in  [n]\,, \quad \sum_{b} E_{\bar{\beta},b|y} =\sum_{b\in \overline{\mathcal{G}}_y} E_{\bar{\beta},b|y} + \sum_{b\in {\mathcal{G}}_y} E_{\bar{\beta},b|y} = \sum_c p(\bar{\beta}|c) \left(\sum_{b \in \overline{\mathcal{G}}_y} E'_{c,b|y} +  E'_{c,\star|y}\right)
	 = \sum_c p(\bar{\beta}|c)\, E'_c
	  \eqqcolon E_{\bar{\beta}}\,,
\end{equation}
where we defined $E_{\bar{\beta}} \coloneqq \sum_c  p(\bar{\beta}|c) \, E'_c$.
Condition~\eqref{appeq:consistencySDP} is satisfied due to 
\begin{align}
 &\forall b \in \overline{\mathcal{G}}_y\,, \quad \sum_{\bar{\beta}} E_{\bar{\beta},b|y} = \sum_{\bar{\beta}} p(\bar{\beta}|c) \sum_c E'_{c,b|y} = \sum_c E'_{c,b|y} = B_{b|y} \,,\\
&\forall b \in {\mathcal{G}}_y\,, \quad \sum_{\bar{\beta}} E_{\bar{\beta},b|y} =  \sum_c E'_{c,\star|y}\, \sum_{\bar{\beta}} \delta_{b,\beta_y} p(\bar{\beta}|c)   =  \sum_c E'_{c,\star|y}\,  p(b|y,c) =  \sum_c E'_{c,b|y} = B_{b|y} \,.
\end{align}
Finally, the operators $E_{\bar{\beta},\star|y}$ are given by
\begin{equation}
	E_{\bar{\beta},\star|y} := E_{\bar{\beta}} - \sum_{b \in \overline{\mathcal{G}}_y} E_{\bar{\beta},b|y} = \sum_c  p(\bar{\beta}|c) \left( E'_c -  \sum_{b \in \overline{\mathcal{G}}_y} E'_{c,b|y}\right) = \sum_c  p(\bar{\beta}|c) \, E'_{c,\star|y}\,,
\end{equation}
which leads to condition~\eqref{appeq:partial-JM_SDP}, i.e.,
\begin{equation}
	\forall b \in {\mathcal{G}}_y\,, \quad	E_{\bar{\beta},b|y} = \delta_{b,\beta_y} \sum_c  p(\bar{\beta}|c)\, E'_{c,\star|y} = \delta_{b,\beta_y} E_{\bar{\beta},\star|y}\,.
\end{equation}
This completes the proof.
\end{proof}

\section{Proof of Observation~\ref{obs:3}} \label{app:proof:obs3}
Here we reproduce the proof of Observation~\ref{obs:3} in Ref.~\cite{Masini2024} for completeness.
\begin{obs*}[Masini \emph{et al.}~\cite{Masini2024}]
	If the measurements $\{B_y\}_{y\in  [n]}$ are  $\mathcal{G}$-JM, where $\mathcal{G}_y = [k_y]$ for all $y \neq y'$ and $\mathcal{G}_{y'} = \emptyset$ for a single input $y' \in  [n]$, then the measurements $\{B_y\}_{y \in  [n]}$ are fully jointly measurable.
\end{obs*}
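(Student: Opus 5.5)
We work with the deterministic formulation, Def.~\ref{defn:partial-joint-measurability2}, which is equivalent to the general one by App.~\ref{app:proof:deterministic_response}. Take the tuple $\bar\beta=(\beta_y)_{y\neq y'}$ with $\beta_y\in[k_y]$, and the POVM elements $E_{\bar\beta,b|y}$. For every $y\neq y'$ we have $\mathcal{G}_y=[k_y]$, so $\overline{\mathcal{G}}_y=\emptyset$ and $E_{\bar\beta,\star|y}=E_{\bar\beta}$. The partial JM condition~\eqref{eq:partial-JM_SDP} then reads $E_{\bar\beta,b|y}=\delta_{b,\beta_y}E_{\bar\beta}$ for all $y\neq y'$. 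For $y'$ no constraint is imposed beyond no-signaling, $\sum_b E_{\bar\beta,b|y'}=E_{\bar\beta}$.

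The natural candidate for a full parent POVM uses the enlarged label $c=(\bar\beta,b')$ with $b'\in[k_{y'}]$, and sets
\begin{equation}
	F_{(\bar\beta,b')}\coloneqq E_{\bar\beta,b'|y'}\,.
\end{equation}
Positivity is inherited from the $E_{\bar\beta,b|y'}$. Normalization follows from no-signaling and consistency, since $\sum_{\bar\beta,b'}E_{\bar\beta,b'|y'}=\sum_{b'}B_{b'|y'}=\one$. As deterministic response function we take $p(b|y,(\bar\beta,b'))=\delta_{b,\beta_y}$ for $y\neq y'$ and $p(b|y',(\bar\beta,b'))=\delta_{b,b'}$.

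It remains to check Eq.~\eqref{defn:joint-measurability1}. For $y'$ we get $\sum_{\bar\beta,b'}\delta_{b,b'}E_{\bar\beta,b'|y'}=\sum_{\bar\beta}E_{\bar\beta,b|y'}=B_{b|y'}$ by consistency. For $y\neq y'$ we get
\begin{equation}
	\sum_{\bar\beta,b'}\delta_{b,\beta_y}E_{\bar\beta,b'|y'}=\sum_{\bar\beta}\delta_{b,\beta_y}E_{\bar\beta}=\sum_{\bar\beta}E_{\bar\beta,b|y}=B_{b|y}\,.
\end{equation}
Here the first equality uses no-signaling between $y'$ and $E_{\bar\beta}$, the second uses the partial JM condition for $y$, and the third uses consistency.

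The only real obstacle is conceptual: we must see that no-signaling transfers the $y'$-refinement onto the classical label, so the quantum freedom at $y'$ can be absorbed into the parent. Once the parent is chosen as $E_{\bar\beta,b'|y'}$, every step is routine bookkeeping. One caveat concerns the case of several inputs with $\mathcal{G}_y=\emptyset$. There the construction would need $E_{\bar\beta,b|y_1}$ and $E_{\bar\beta,b|y_2}$ to admit a common refinement, which fails in general. This is why the statement requires a single such input.
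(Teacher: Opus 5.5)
Your proof is correct and uses the same construction as the paper's. Your parent POVM $E_{\bar\beta,b'|y'}=\ins_{\bar\beta}^\dagger(M_{b'|y',\bar\beta})$ is exactly the concatenation of the instrument $\{\ins_{\bar\beta}\}_{\bar\beta}$ with the $y'$-measurement, which gives classical outcomes $(\bar\beta,b')$ that fix every output. The paper argues this operationally, while you verify it algebraically in the partial-parent-POVM formulation, which makes explicit where no-signaling enters for the inputs $y\neq y'$.
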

\begin{proof}
Since $\{B_y\}_{y\in [n]}$ are $\mathcal{G}$-JM, the outcomes $\bar{\beta}$ of the quantum instrument $\{\ins_{\bar{\beta}}\}_{\bar{\beta}}$ fully determine the outcomes of all the inputs $y \neq y'$. For the outcomes of the remaining input $y'$, Bob's device performs a measurement with POVM elements $M_{b|y',\bar{\beta}}$ on the post-measurement state. However, this measurement can be performed immediately after the action of the quantum instrument and the resulting classical outcome $b$ can be sent to Bob's device. This procedure does not require any quantum state to be sent to Bob's device. Furthermore, the concatenation of the quantum instrument  ${\{\ins_{\bar{\beta}}\}}_{\bar{\beta}}$ and the subsequent measurement represents a quantum instrument with classical outcomes $(\bar{\beta},b)$, which completely determines the outcomes of all of Bob's measurements. Hence, the measurements $\{B_y\}_{y \in [n]}$ are fully jointly measurable.
\end{proof}

\section{Proof of the bounds~\eqref{eq:bound_qubit_case_d_n2} and~\eqref{eq:bound_qubit_case_d_n}}\label{app:qubit_case_d}

Here we complete the optimizations leading to the bounds
\eqref{eq:bound_qubit_case_d_n2} and
\eqref{eq:bound_qubit_case_d_n}. We start from the general sufficient
condition derived in Sec.~\ref{sec:qubit_case_d},
\begin{equation}\label{appeq:general_finite_n_case_d_bound_app}
	\eta
	\le
	\max_{\|\vec m\|=1,\,0\le\nu<1}
	\min\left\{
		F_{\nu,t_1},
		\min_{y\neq1}G_{\nu,t_y}
	\right\},
\end{equation}
where $F_{\nu,t}=(1-\nu^2)/[2(1-\nu t)]$, $t_y=|\vec m\cdot\vec r_y|$, and
\begin{equation}\label{appeq:G_case_d_app}
	G_{\nu,t}
	=
	\begin{cases}
	1-\nu\sqrt{1-t^2},
	&
	\text{if } \nu\left(t+\sqrt{1-t^2}\right)\le1,
	\\[1ex]
	\dfrac{1-\nu^2}{2(1-\nu t)},
	&
	\text{if } \nu\left(t+\sqrt{1-t^2}\right)\ge1.
	\end{cases}
\end{equation}

Consider first the case of two measurements. We take
$\vec r_1=\hat z$ and
$\vec r_2=\cos(\theta)\,\hat z+\sin(\theta)\,\hat x$, with
$\theta\in[0,\pi/2]$, and choose
$\vec m=\cos(x)\,\hat z+\sin(x)\,\hat x$, with $0\le x\le\theta$.
Then
\begin{equation}
	t_1=\cos(x),\qquad
	t_2=\cos(\theta-x),\qquad
	\sqrt{1-t_2^2}=\sin(\theta-x).
\end{equation}
Using the first branch of $G_{\nu,t_2}$, which is admissible whenever
\begin{equation}\label{appeq:stationary_branch_validity_n2_case_d}
	\nu\left(\cos(\theta-x)+\sin(\theta-x)\right)\le1,
\end{equation}
the two constraints $\eta\le F_{\nu,t_1}$ and
$\eta\le G_{\nu,t_2}$ reduce to
\begin{equation}\label{appeq:optimization_case_d_n2_reduced}
	\eta \le \frac{1-\nu^2}{2(1-\nu\cos(x))}\,,\qquad\eta\le 1-\nu\sin(\theta-x).
\end{equation}
We choose $x=x_*$ such that $\tan(\theta-x_*)=\tan(\theta)/2$, or equivalently,
\begin{equation}\label{appeq:optimal_x_case_d_n2}
	\cos(x_*)=\frac{1+\cos^2(\theta)}{\sqrt{1+3\cos^2(\theta)}}\,,\qquad\sin(x_*)=\frac{\sin(\theta)\cos(\theta)}{\sqrt{1+3\cos^2(\theta)}}\,,
\end{equation}
together with
\begin{equation}\label{appeq:optimal_nu_case_d_n2}
	\nu_*
	=
	\frac{\sqrt{1+3\cos^2(\theta)}}{2+\sin(\theta)}.
\end{equation}
For this choice,
\begin{equation}
	\nu_*\sin(\theta-x_*)=\frac{\sin(\theta)}{2+\sin(\theta)},
	\qquad
	1-\nu_*\sin(\theta-x_*)=\frac{2}{2+\sin(\theta)},
\end{equation}
and a direct substitution also gives
\begin{equation}
	\frac{1-\nu_*^2}{2(1-\nu_*\cos(x_*))}
	=
	\frac{2}{2+\sin(\theta)}.
\end{equation}
Thus, both constraints in Eq.~\eqref{appeq:optimization_case_d_n2_reduced}
are saturated, and we obtain
\begin{equation}
	\eta
	\le
	\frac{2}{2+\sin(\theta)}\,,
\end{equation}
which is Eq.~\eqref{eq:bound_qubit_case_d_n2}.

It remains to check that the first branch of $G_{\nu,t_2}$ is indeed
admissible. Using the above values,
\begin{equation}
	\nu_*
	\left(
		\cos(\theta-x_*)+\sin(\theta-x_*)
	\right)
	=
	\frac{2\cos(\theta)+\sin(\theta)}{2+\sin(\theta)}
	\le1
\end{equation}
for all $\theta\in[0,\pi/2]$. Hence
Eq.~\eqref{appeq:stationary_branch_validity_n2_case_d} holds. The
corresponding optimal value of the postprocessing parameter is
\begin{equation}
	\gamma_2^* = \frac{\nu_*\cos(\theta-x_*)}{1-\nu_*\sin(\theta-x_*)}=\cos(\theta).
\end{equation}

For completeness, we also show that the above choices of $x$ and $\nu$
are optimal, i.e., that for all $0\le\nu\le1$ and $0\le x\le\theta$,
\begin{equation}\label{appeq:n2_case_d_reduced_upper_bound}
	\min\left\{
		\frac{1-\nu^2}{2(1-\nu\cos(x))},
		1-\nu\sin(\theta-x)
	\right\}
	\le
	\frac{2}{2+\sin(\theta)}.
\end{equation}
Let $s=\sin(\theta)$ and $\tau=2/(2+s)$. If $1-\nu\sin(\theta-x)\le\tau$, then the claim is immediate. Otherwise, $\nu\sin(\theta-x)<s/(2+s)$. It is then enough to prove
\begin{equation}
	\frac{1-\nu^2}{2(1-\nu\cos(x))}\le\tau,
\end{equation}
which is equivalent to
\begin{equation}\label{appeq:n2_case_d_quadratic_ineq}
	(2+s)\nu^2-4\nu\cos(x)+2-s\ge0.
\end{equation}
For fixed $x$, the left-hand side is a convex quadratic in $\nu$. Under
the constraint $\nu\sin(\theta-x)\le s/(2+s)$, its minimum is
nonnegative. Indeed, if the unconstrained minimizer
$\nu_0=2\cos(x)/(2+s)$ satisfies the constraint $\nu_0 \sin(\theta-x)\le s/(2+s)$, then
$2\cos(x)\sin(\theta-x)\le s$. This implies $x\ge\theta/2$, and hence
\begin{align}
	(2+s)\nu_0^2-4\nu_0\cos(x)+2-s
	&=
	2-s-\frac{4\cos^2x}{2+s}
	\nonumber\\
	&\ge
	2-s-\frac{4\cos^2(\theta/2)}{2+s}
	=
	\frac{(1-\cos(\theta))^2}{2+s}
	\ge0.
\end{align}
If the unconstrained minimizer does not satisfy the constraint, the
minimum under the constraint is attained at
$\nu=s/[(2+s)\sin(\theta-x)]$. Writing $\beta=\theta-x$, the left-hand
side of Eq.~\eqref{appeq:n2_case_d_quadratic_ineq} becomes
\begin{equation}
	\frac{[2\cos(\theta)\tan(\beta)-\sin(\theta)]^2}
	{[2+\sin(\theta)]\tan^2(\beta)}
	\ge0.
\end{equation}
This proves Eq.~\eqref{appeq:n2_case_d_reduced_upper_bound}.

We now prove Eq.~\eqref{eq:bound_qubit_case_d_n} for an arbitrary
number of measurements $n$. Recall that
\begin{equation}
	\theta
	=
	\max_{y=2,\ldots,n}
	\arccos|\vec r_1\cdot\vec r_y|.
\end{equation}
Choose $\vec m=\vec r_1$. Then $t_1=1$, while for every $y\neq1$,
\begin{equation}
	t_y=|\vec r_1\cdot\vec r_y|\ge\cos(\theta),
	\qquad
	\sqrt{1-t_y^2}\le\sin(\theta).
\end{equation}
The first constraint becomes
\begin{equation}
	\eta\le F_{\nu,1}=\frac{1+\nu}{2}.
\end{equation}
Using the first branch of $G_{\nu,t_y}$ for the other measurements, it
is sufficient to impose the worst-case constraint
$\eta\le1-\nu\sin(\theta)$. Thus
\begin{equation}\label{appeq:cap_case_d_optimization}
	\eta
	\le
	\max_{0\le\nu<1}
	\min\left\{
		\frac{1+\nu}{2},
		1-\nu\sin(\theta)
	\right\}.
\end{equation}
For any fixed $\theta \in [0,\pi/2]$, the first term is affine and increasing in $\nu$, while the second term is affine and monotonically decreasing in $\nu$. Therefore, the optimum is obtained by setting them equal:
\begin{equation}
	\frac{1+\nu_*}{2}=1-\nu_*\sin(\theta)\,.
\end{equation}
This gives
\begin{equation}
	\nu_*=\frac{1}{1+2\sin(\theta)} \quad \text{and} \quad
	\eta
	\le
	\frac{1+\sin(\theta)}{1+2\sin(\theta)},
\end{equation}
which proves Eq.~\eqref{eq:bound_qubit_case_d_n}.

It remains to verify that the first branch of $G_{\nu,t_y}$ is
admissible for all $y\neq1$ at $\nu=\nu_*$. This first branch
corresponds to the critical point
\begin{equation}
	\gamma_y^*
	=
	\frac{\nu t_y}{1-\nu\sqrt{1-t_y^2}}
\end{equation}
obtained by setting to zero the derivative of the denominator of the right-hand side of
Eq.~\eqref{eq:constr2}. This critical point is admissible only if
$0\le\gamma_y^*\le1$. Since $t_y\le1$ and
$\sqrt{1-t_y^2}\le\sin(\theta)$, we have
\begin{equation}
	\gamma_y^*
	\le
	\frac{\nu_*}{1-\nu_*\sin(\theta)}
	=
	\frac{1}{1+\sin(\theta)}
	\le1.
\end{equation}
Thus, the critical point is admissible, completing the proof.

\section{Effect of postselection on the security of QKD protocols} \label{app:postselection}

Below, we present a simple example that illustrates how postselection affects the information available to Eve in QKD protocols.

Consider a scenario where Alice and Bob share the maximally entangled state
$|\Phi\rangle = \frac{1}{\sqrt{d}} \sum_{i=1}^d |i\rangle_A |i\rangle_B,$ and perform measurements in the computational basis to generate a raw key. We assume Bob's detector is affected by loss as in Section~\ref{sec:applications}. Consider the attack described in case \emph{(c)} or \emph{(d)} of Section~\ref{sec:generic_strategies}, which gives Eve full information about Bob's conclusive outcomes with probability $\eta$, for a suitable $0<\eta<1$. However, with probability $1-\eta$, Bob obtains a no-click outcome $\varnothing$; in these instances, Eve's measurement yields a random outcome in $\{1, \dots, d\}$. 
Accordingly, Alice, Bob, and Eve share random variables $A \in \{1, \dots, d\}$, $B \in \{1, \dots, d, \varnothing\}$, and $E \in \{1, \dots, d\}$ with joint probability distribution 
\begin{equation}\label{eq:example_postselection}
    P(A=a, B=b, E=e) = 
    \begin{cases} 
        \frac{\eta}{d} & \text{if } a=b=e \text{ and } b \neq \varnothing, \\
        \frac{1-\eta}{d^2} & \text{if } b = \varnothing.
    \end{cases}
\end{equation}
In other words, whenever Bob obtains a conclusive outcome, Alice and Eve are perfectly correlated with Bob, while whenever Bob obtains a no-click outcome, Alice and Eve are uncorrelated with Bob and with each other. 

Suppose Alice and Bob repeat this process for $n$ trials, resulting in the raw strings $\mathsf{A} = [{A}_1, \dots, {A}_n]$, $\mathsf{B} = [{B}_1, \dots, {B}_n]$, and $\mathsf{E} = [{E}_1, \dots, {E}_n]$. Let $\mathsf{C} = [{C}_1, \dots, {C}_n]$ be the string of postselection flags, where ${C}_i = 1$ if ${B}_i \neq \varnothing$ and ${C}_i = 0$ if ${B}_i = \varnothing$. 

To illustrate the flaw in Ref.~\cite{Branciard2012}, let us compute $H(\mathsf{A}|\mathsf{E})$ and $H(\mathsf{A}|\mathsf{E},\mathsf{C})$ i.e., the conditional Shannon entropy of $\mathsf{A}$ given Eve's side information $\mathsf{E}$ and $\mathsf{E}'=(\mathsf{E},\mathsf{C})$. For a single round, the joint distribution of $A$ and ${E}$ is
\begin{equation}
	P(A=a,E=e) = 
	\begin{cases} 
		\frac{\eta}{d} + \frac{1-\eta}{d^2} & \text{if } a=e, \\
		\frac{1-\eta}{d^2} & \text{if } a \neq e.
	\end{cases}
\end{equation}
It then follows that 
\begin{equation}
H(\mathsf A|\mathsf E)
=n\left[-\left(\eta+\frac{1-\eta}{d}\right)
\log_2\left(\eta+\frac{1-\eta}{d}\right)
-(d-1)\frac{1-\eta}{d}
\log_2\left(\frac{1-\eta}{d}\right)\right].
\end{equation}
Let us now compute $H(\mathsf{A}|\mathsf{E}') = H(\mathsf{A}|\mathsf{E},\mathsf{C})$. If $C_i=1$, then $A_i = B_i =E_i$, so $H(A_i|E_i,C_i=1)=0$. If $C_i=0$, then $A_i$ is uniformly distributed in $\{1, \dots, d\}$ and independent of $E_i$, so $H(A_i|E_i,C_i=0)=\log_2(d)$. Since $P(C_i=1) = \eta$ and $P(C_i=0) = 1-\eta$, we have
\begin{equation}
H(\mathsf{A}|\mathsf{E}') = n\left[\eta\cdot 0 + (1-\eta)\log_2(d)\right] = n(1-\eta)\log_2(d).
\end{equation}
It can be verified that $H(\mathsf{A}|\mathsf{E}')<H(\mathsf{A}|\mathsf{E})$ for $0<\eta<1$.
Intuitively, before Bob's announcement, Eve's variable $\mathsf{E}$ conflates two regimes: click rounds, where she has full knowledge of $\mathsf{A}$, and no-click rounds, where she knows nothing about $\mathsf{A}$. The announcement allows Eve to separate these two regimes, strictly increasing her knowledge about $\mathsf{A}$. 
In the context of the security proof of Ref.~\cite{Branciard2012}, this implies that it is not justified to assume that Eve's information is upper bounded by $H(\mathsf{A}|\mathsf{E})$ when Bob postselects on the conclusive events, as Eve's information can be strictly larger than this quantity.

When the attack described above is applied to a QKD protocol and Alice, Bob, and Eve share the random variables $\mathsf{A}$, $\mathsf{B}$, and $\mathsf{E}$ with the distribution in Eq.~\eqref{eq:example_postselection}, Alice and Bob can clearly not distill a secret key if they discard the no-click events, as Bob's string $\mathsf{B}$ is perfectly correlated with Eve's string $\mathsf{E}$.
However, if Alice and Bob do not discard the no-click events, they can distill a secret key, even though (i) Eve has full information about Bob's conclusive outcomes and (ii) Bob's no-click events are uncorrelated with the values held by Alice and Eve. Indeed, by Csisz\'ar-K\"orner's formula~\cite{Csiszar1978}, a positive key rate can be established with direct reconciliation (Alice sends information to Bob and Bob tries to correct his data) if $I(\mathsf{A}:\mathsf{B})-I(\mathsf{A}:\mathsf{E}) > 0$ and with reverse reconciliation (Bob sends information to Alice and Alice tries to correct her data) if $I(\mathsf{B}:\mathsf{A})-I(\mathsf{B}:\mathsf{E}) > 0$, where $I(X:Y)$ denotes the mutual information of $X$ and $Y$. A direct computation using the probability distribution in Eq.~\eqref{eq:example_postselection} gives
\begin{equation}
	I(\mathsf{B}:\mathsf{A})-I(\mathsf{B}:\mathsf{E})=0
\end{equation}
implying that Bob cannot initiate the reconciliation process to distill a key, as Eve's knowledge of Bob's outcomes is identical to Alice's.
However, a direct computation also gives
\begin{equation}
	I(\mathsf{A}:\mathsf{B})-I(\mathsf{A}:\mathsf{E}) = \eta \log_2(d)-\frac{1}{d}\left[\left(1+\eta(d-1)\right)\log_2\left(1+\eta(d-1)\right)+(d-1)(1-\eta)\log_2(1-\eta)\right]\,
\end{equation}
which is positive for all $0<\eta<1$. Hence, a positive key rate can be established with direct reconciliation, even though Eve has full information about Bob's conclusive outcomes and the no-click events are uncorrelated with Alice and Eve.

To illustrate this, consider the following example strings (assuming $d=2$ for simplicity) that satisfy the conditions of the model:
\begin{equation}
\begin{aligned} 
    \mathsf{A} &= 001011010100010\dots \\ 
    \mathsf{B} &= 001\varnothing11010\varnothing00\varnothing10\dots \\ 
    \mathsf{E} &= 001111010000110\dots 
\end{aligned}
\end{equation}
Via public communication, Bob asks Alice to compute the modulo-2 sum (parity) of her first five bits. Alice responds that the sum is $0$. Since Bob knows his own first five bits are $(0, 0, 1, \varnothing, 1)$, the parity check reveals that Alice's $4^{\text{th}}$ bit must be $0$. Bob updates his string accordingly:
\begin{equation}
    \mathsf{B}' = 001011010\varnothing00\varnothing10\dots
\end{equation}
This same public communication informs Eve that at least one of her first five bits differs from Alice's, as the sum of Eve's first five bits is $1$. However, because Eve does not know the exact position of the erroneous bit, she cannot correct her string with certainty.
By repeating this procedure, Alice and Bob can arrive at perfectly correlated raw strings, while Eve remains imperfectly correlated, allowing them to distill a secret key.

\end{document}